\newtheoremstyle{break}
  {}
  {}
  {\itshape}
  {}
  {\bfseries}
  {}
  {\newline}
  {}
\theoremstyle{break}
\newcommand{\off}{\textnormal{off}}
\newcommand{\on}{\textnormal{on}}
\newcommand{\uv}{\underline{v}}
\newcommand{\bv}{\bar{v}}
\newtheorem{theorem}{Theorem}
\newtheorem{definition}{Definition}
\newtheorem{proposition}{Proposition}
\begin{document}

\title{How Do Digital Advertising Auctions \\
Impact Product Prices?\thanks{%
We acknowledge financial support through NSF Grant SES-1948692, the Omidyar
Network, and the Sloan Foundation. An earlier version of this paper
circulated under the title ``Managed Campaigns and Data-Augmented Auctions
for Digital Advertising.'' An extended abstract appears in the
Proceedings of the 24th ACM Conference on Economics and Computation (EC
'23). We thank the editor and four referees for productive suggestions. We
thank Santiago Balseiro, Gianluca Binelli, and Balu Sivan for helpful
comments, and Roi Orzach for excellent research assistance.}}
\author{Dirk Bergemann\thanks{%
Department of Economics, Yale University, New Haven, CT 06511,
dirk.bergemann@yale.edu} \and Alessandro Bonatti\thanks{%
Sloan School of Management, MIT, Cambridge, MA 02142, bonatti@mit.edu} \and %
Nicholas Wu\thanks{%
Department of Economics, Yale University, New Haven, CT 06511,
nick.wu@yale.edu}}
\date{\today }
\maketitle

\begin{abstract}

We present a model of digital advertising with three key features: (i) advertisers can reach consumers on and off a platform, (ii) additional data enhances the value of advertiser-consumer matches, and (iii) bidding follows  auction-like mechanisms. We contrast data-augmented auctions, which leverage the platform’s data advantage to improve match quality, and managed campaign mechanisms that automate match formation and price-setting.

The platform-optimal mechanism is a managed campaign that conditions on-platform prices for sponsored products on the off-platform prices set by all advertisers. This mechanism yields  the efficient on-platform allocation but inefficient off-platform allocations due to high product prices; it attains the vertical integration profit for the platform and advertisers; and it increases off-platform product prices and decreases consumer surplus, relative to data-augmented auctions.\bigskip

\noindent\textbf{Keywords:} Data, Advertising, Competition, Digital
Platforms, Auctions, Automated Bidding, Managed Advertising Campaigns,
Matching, Price Discrimination.\bigskip

\noindent\textbf{JEL Codes:} D44, D82, D83.\bigskip\bigskip
\end{abstract}



\newpage

\section{Introduction}


\subsection{Motivation}

Digital advertising facilitates the matching of consumers and advertisers
online. Large platforms utilize their extensive consumer data to connect
online shoppers with their preferred firms and products. In turn,
advertisers join these platforms in order to target a wide range of
potential consumers beyond their existing customer base. As a result, sponsored content is ubiquitous on the Internet: advertising makes up nearly all the revenue of search engines and social media platforms, a growing fraction of the revenue of retail platforms such as Amazon and Instacart, and a large fraction of other retail platforms' revenue, such as Alibaba's Taobao marketplace.

The role of these platforms'  proprietary datasets becomes apparent when we decompose the value of digital paid traffic across the web. Google, Meta, Amazon, and other platforms place advertising \emph{directly} on
their own sites (through sponsored search results, stories, and products) and also serve as \emph{intermediaries} that  place advertising on third-party sites. The most recent data from public filings show that
nearly $60\%$ of the worldwide digital advertising revenue (which exceeds $\$600$ billion) accrued on the  platforms' own websites \citep{lebo23}. In particular, Google received revenues of $\$191$ billion from digital advertising on its own sites, (e.g., google.com, youtube.com, etc.)  and merely $\$33$ billion from ad placement on third-party websites.\footnote{See Google's 2023 financial report to the Securities and Exchange Commission, available at\\ \url{https://www.statista.com/study/163755/alphabet-google-annual-report-2023}.} Thus, the majority of revenue accrues precisely where
the deployment of proprietary data is completely unrestricted and not
accessible to competing marketers. Indeed, the evolution of this marketplace
suggests a reversal in the traditional assumptions on asymmetric information
in digital advertising.

As the market for digital advertising has grown and become more complex, the prevailing mechanisms by which platforms sell ads have also shifted. Digital platforms increasingly act as
intermediaries that run managed campaigns for advertisers, who  set a fixed budget, specify high-level objectives for their campaigns, and leave the task of bidding to ``auto-bidding'' algorithms offered by the platform. 
 The most recent estimates suggest that
over $80\%$ of digital advertising is now generated by managed campaigns \citep{dgjl22,dmmz22}. For example, over $80\%$
of Google  advertisers were using automated bidding in 2023.\footnote{%
See \url{https://ads.google.com/home/measurement/bidding/}. Furthermore, \citet[\S 5.201, \S 5.76]{cma20} reported that 40-50\% of Google's 2019 search
advertising revenue in the UK came from advertisers using automated bidding. Finally, 90-100\% of UK advertisers on Facebook were using the default auto-bidding feature, which  does not allow advertisers to specify a maximum bid.}

In this paper, we provide an equilibrium treatment of how data-intensive mechanisms for selling advertising impact product prices and welfare both on and off the platform.  Our approach takes two fundamental aspects of digital
advertising into account. First, consistent with the revenue breakdown described above, platforms
possess valuable data that can enhance the matching efficiency. Second, advertisers have parallel sales channels, i.e., they can reach their customers on and off the platform. 

We consider a monopolist digital
platform that sells access to its users. Advertisers determine their pricing strategy on and off the platform and their advertising strategy on the
platform. On-platform consumers act as \emph{shoppers} and choose the product
that offers the highest net value. Because these consumers compare the advertised offers to all firms' off-platform prices, advertisers endogenously behave as if under a ``showrooming''
constraint: they wish to ensure their on-platform offers are at least as attractive as their off-platform offers. Conversely, consumers off the platform are \emph{loyal} and buy from a single brand. Consequently, advertisers face a trade-off between setting optimal prices for their loyal
customers off-platform and the option of charging higher personalized prices to
on-platform shoppers.\footnote{%
In pure advertising platforms, where the matching fee is typically incurred
before the transaction (e.g., through pay-per-impression or pay-per-click
fees), the advertiser faces the showrooming constraint directly. The advertiser wants to pay for the listing only if it leads to a sale, as the offline transaction would have occurred without the advertising. In platforms where the fee is based on transactions, such as referral fees on shopping services like Amazon, the platform often imposes the showrooming
constraint through a most favored nation clause. This clause requires the
advertiser to offer the most favorable price online.}

A key innovation in our model is that the platform actively influences the firms' advertising campaigns. With access to the platform's data, advertisers can offer prices that reflect the consumers' willingness to pay. This form of price discrimination broadens the market and enhances the
efficiency of matching on the platform. 
Off the platform, advertisers lack additional data
and offer a uniform price. 
We contrast two main mechanisms for allocating advertising space on the platform: \emph{data-augmented auctions} and \emph{managed campaigns}. Both these mechanisms, as well as our simple advertising model, are of course simplifications. Throughout the paper, we discuss the relevance of each of our modeling choices to real-world advertising markets: our model of advertised prices in Section \ref{sec:model}; the auction format in Section \ref{section:dab}; and the managed-campaign mechanism in Section \ref{sec:managed_campaign}.

Our model demonstrates how  any
analysis of the pass-through of online advertising costs must account for
cross-channel distortions. Indeed, we show that advertisers raise prices off
the platform to gain a competitive edge on the platform. In particular, under the platform-optimal mechanism, the higher costs of online advertising are passed on to consumers by means of higher product prices \emph{off} rather than \emph{on} the platform.

\subsection{Results}

We begin our analysis with a second-price auction for a single advertising slot where the platform augments the bidders' information by soliciting bids based on the match values with each  consumer. We refer
to this as \emph{data-augmented bidding}: each advertiser submits a bid for the slot and a price at which to offer its product if it wins the slot.

We derive the optimal bidding and pricing strategy of the advertisers. On the platform, the second-price auction
implements an efficient allocation, and the additional data allows the advertisers to sell successfully to consumers with lower values without the need to price them out of the market (Theorem \ref{thm:bidding_eq}). Additionally, each advertiser must set the price at which to offer its product to \emph{loyal customers} off the platform. In equilibrium, the advertisers raise their off-platform prices, relative to the prices they would have charged in a stand-alone market (Proposition \ref{prop:bidding_price_larger}): by
offering their product only at a higher price, each advertiser can weaken the showrooming constraint and extract more surplus on the platform. Consequently, the off-platform prices increase with the number of on-platform shoppers.

Next, we introduce the concept of a \emph{managed campaign}. In this more
centralized mechanism, the platform proposes to each advertiser a steering policy and a  pricing policy  for their
product on the platform. Contextually, the platform requests a fixed fee from each advertiser, which we can interpret as a required advertising budget. Each advertiser simultaneously decides whether to enter into the managed campaign or not, and how to price its product off the platform. We show that the platform optimizes its revenue by matching firms and consumers efficiently and by offering a \emph{best-value pricing} policy. This policy ensures the efficient firm always makes the offer with the best value to the consumer, even if its competitors deviate in their posted prices (Theorem \ref{thm:managed_eq_best_value}). In doing so, the
platform weakens competition and leads the firms to raise their posted prices off the platform in order to extract more surplus from online consumers. 

Best-value pricing is not only revenue-optimal for the platform; the joint producer surplus attains the vertical integration benchmark where one
firm controls all the advertisers and the platform (Theorem \ref%
{thm:best_value_max}). In consequence, the posted prices off-platform are higher than under the data-augmented auction (Theorem \ref{thm:managed_prices_highest_variant}). By comparing the  prices charged to consumers and the advertising costs across these two mechanisms, we can then quantify a notion of pass-through (Proposition \ref{prop:passthrough}).

\subsection{Policy Relevance}


The digital platforms that offer managed campaigns also enjoy significant market power, which has raised regulatory concerns. In a recent report, the
UK regulator argues:

\begin{quote}
``Where an advertising platform has market power [...] advertiser bids in
its auctions are higher, resulting in higher prices. In addition, the
platforms may be able to use levers including the use of reserve prices or
mechanisms such as automated bidding to extract more rent from advertisers.
[...]

Higher advertising prices matter because they represent increased costs to
the firms producing goods and services which are purchased by consumers. We
would expect these costs to be passed through to consumers in terms of
higher prices for goods and services, even if the downstream market is
highly competitive." \cite[\S 6.19, \S6.20.]{cma20}
\end{quote}

The  \citet[Chapter 5 and Appendix Q]{cma20} sets forth the principle that platforms should act in customers'
best interests when making choices on their behalf. Our baseline model raises the concern that automated bidding options in Google and Facebook could be used to increase platform revenues to the consumers' detriment instead.

We therefore deploy our model to examine two competition- and privacy-policy interventions. The first policy we consider restricts the platform's auto-bidding algorithms by requiring the pricing and steering policies to be \textit{independent} of all off-platform posted prices. We show that  \emph{any} independent managed campaign that steers consumers efficiently leads to lower on-platform prices than the fully optimal managed campaign  (Theorem \ref{thm:indep_managed_campaign}). In particular, limiting the signals that the pricing policy can use restores  the possibility 
for on-platform consumers to be poached by other firms through off-platform price cuts. This force fosters competition and benefits consumers on both sales channels.


The second policy we consider is a privacy restriction that prevents the
platform from steering consumers and setting prices on the basis of
the consumers' detailed data. Instead, we allow the platform to condition its
steering and pricing decision on the basis of coarse information only, i.e.,
on the identity of each consumer's favorite firm. This restriction is
equivalent to removing the ability to perfectly price discriminate using the
platform's data. In this scenario, the firms sell to both on- and
off-platform consumers via the same posted price (Proposition \ref{prop:partial_privacy}). 
The privacy restriction reduces off-platform prices compared to the benchmark of a  managed campaign, but  may reduce total surplus on the platform, because low-value consumers no longer receive personalized discounts.






\subsection{Related Literature}


Our paper contributes to the literature on online advertising auctions. Recent work in this field studies learning in repeated auctions  \citep{bg2019,kn2020,nccep2022}, discriminatory effects \citep{cmv2019,asbkmr2019,nt2020}, and collusion \citep{dgp2020, drrs2022}. Our focus, instead, is the comparison of auctions with other
allocation mechanisms in a setting with a given information structure and parallel sales channels. As such, our approach is related to, yet distinct from, \cite{bash22}, who compare auctions and auto-bidding mechanisms in a single market under exogenous limits to the ability to steer and to price discriminate. \cite{mope22} study a model of
targeted bidding (i.e., data-augmented auctions) where the  number of organic search results is fixed. In their setting, sponsored content may crowd out organic
information when the same firm wins both types of links. This limits competition, facilitates market segmentation, and reduces welfare.

Several papers \citep{glp2021, lmp2022, meht22, dmmz22} study online auction design in the presence of autobidders and return-on-investment constraints.\footnote{%
A recent literature on auto-bidding algorithms allows for objective functions by the bidders outside of the class of quasilinear utility models common in mechanism design. For example, the bidder may seek to maximize return on investments and have budget or spending constraints. \citet{agga19}, \citet{bals21}, and \citet{deng21} offer excellent introductions this rapidly growing research area.}  Our setting adds a dimension related to advertised prices: firms submit both bids for a sponsored link and tailored prices to offer consumers. While \cite{ll2023} also investigate mechanisms
that allow for advertised prices, we further explore the interaction of these mechanisms with off-platform activity.

Our paper also relates to the literature on information design in auctions and markets. In particular, 
\cite{bebm15}, \cite{hasi22}, and \cite{elgk20} study the effect of market
segmentations and the achievable combinations of consumer and producer
surplus, i.e., how to use data to make markets more or less competitive.

As in \cite{vari80},  the advertisers in our model face two segments of consumers, \emph{shoppers} on the platform and \emph{loyals} off the platform. The design of the auction is therefore subject to the showrooming constraint, i.e., to competition from a separate and distinct market. Earlier papers on  ``partial mechanism
design'' or ``mechanism design with a competitive fringe'' studied  mechanism design in settings where the agents' outside option consists of participating in alternative markets, e.g., %
\citet{phsk12}, \citet{tiro12}, \citet{cade15}, and \citet{fusk15}.
\newpage

The showrooming constraint in our model is related to a growing literature on digital platforms with competing advertisers or multiple sales channels. Recent contributions include \citet{dede16}, \citet{bash20}, 
\citet{mish19}, and \citet{wawr20}. In our setting,  advertisers deter  consumers from showrooming to capture the added value of making data-augmented offers when  selling on the platform. In parallel work, \citet{bebo22} study on- and off-platform competition with multi-product firms and nonlinear pricing. They focus on the
implications of managed campaigns for  equilibrium product quality, relative to our paper's exploration of showrooming and its impact on pricing strategies in the off-platform markets.


Finally, a recent contribution by \cite{varian22} analyzes the relationship between advertising costs and product prices through the lens of a single (representative) online merchant. The size of the advertising audience increases sales proportionally at every price level, with a convex cost of
increasing the audience size. In his separable model, an
exogenous increase in advertising costs does not necessarily lead to an increase in product prices.

\section{Model}

\label{sec:model}

\paragraph{Payoffs and Information}

There are $J$ advertisers (or firms) indexed by $j=1, 2, ..., J$, each selling unique indivisible products and a single digital platform. Each firm's production cost is normalized to zero. There is a unit mass of consumers, each
demanding a single product. The willingness to pay $v_j$ for each firm's
product is drawn independently across consumers and firms according to a
distribution function $F$ that admits a strictly positive density $f$ on its
support $V = [\underline{v},\bar{v}]\subset\mathbb{R}_{+}$. 
The consumer's \textit{value} is given by the vector of willingness to pay 
\begin{equation*}
v = (v_1, ..., v_J) \in V^J\subset\mathbb{R}_{+}^{J}.
\end{equation*}
The utility for a consumer with value $v$ of purchasing product $j$ at price 
$p_j$ is given by 
\begin{equation*}
v_j - p_j.
\end{equation*}
Initially, values are observed by the consumers and by the platform, but not
by the firms.\footnote{
The symmetry in the information is helpful for the welfare comparison but is
clearly a stark assumption. The equilibrium implications are robust to a
more general formulation in which the platform is endowed with partial information.} 
\newpage 

\paragraph{Firms and Platform} The platform presents consumers with a single \textquotedblleft
sponsored\textquotedblright\ result followed by a list of
non-sponsored products. The platform allocates the sponsored position using either a \textit{data-augmented auction} or a \textit{managed campaign}. We
describe these two mechanisms in Sections \ref{section:dab} and 
\ref{sec:managed_campaign}, respectively, and  we connect them to current practices in digital advertising markets. Under either mechanism, an on-platform consumer with value $v$ receives a personalized offer to buy some firm $j$'s product at a price $p_{j}(v)$. Thus, the firm in the sponsored slot can condition its price on the $J$-dimensional consumer value. In addition to the on-platform prices $p_j(v)$, each firm $j$ posts a price $\bar{p}_j$ for its product off the platform.  




\paragraph{On-platform Consumers}

A measure $\lambda \in [0,1]$ of consumers are on-platform ``shoppers.''
These consumers observe $J+1$ prices: the advertised price $p_j(v)$ by the
firm $j$ that is awarded the sponsored slot, as well as the  prices $\bar{p}_k$ posted by all firms $k=1,...,J$. We can view these prices as organic results shown by the platform, or equivalently interpret the model as allowing for free search: only a ``sponsored'' firm can target a price offer to an on-platform consumer, but the consumer can search and find the prices posted by any firm.\footnote{The dual presence of sponsored and organic search describes most closely the practice of platforms such as Google or Amazon. However, in our model, the key feature is having at least some organic search results, not necessarily having both types under one platform. Thus, our model applies  even to social networks like Meta, which have less search functionality compared to Google or Amazon.} 

Under either interpretation, each firm $j$ is subject to a 
\textit{showrooming constraint} when setting its on-platform prices: for all $v$, the prices it advertises on
the platform must satisfy $p_j(v) \le \bar{p}_j$.
Thus, in this model, firms offer the lowest prices on the platform, regardless of whether the platform imposes price-parity or most-favored-nation clauses.\footnote{%
We use the upper bar notation for off-platform prices because the
posted price $\bar{p}_j$ is an upper bound on the amount that any consumer
will pay for firm $j$'s good.}
\paragraph{Off-platform Consumers}

The remaining $1-\lambda$ measure of consumers are ``loyals'' who visit only a single firm off the platform (e.g., its physical store or website).%
\footnote{Section \ref{offplat} extends the model to off-platform competition.} The
off-platform consumer population is divided into $J$ captive segments of
size $(1- \lambda)/J$.  Segment $j$ considers firm $j$ only: these consumers 
buy if and only if the off-platform price $\bar{p}_j$ is lower than their willingness to pay $v_j$.
Figure \ref{fig:model} summarizes our model.

\begin{figure}[htbp]
\centering
\par
\begin{tikzpicture}
\tikzset{elpsb/.style={ellipse,draw=blue!60, fill=blue!5, very thick, minimum width=5em,minimum height=3em,inner ysep=0pt,align=center}}
\tikzset{elpsr/.style={ellipse,draw=red!60, fill=red!5, very thick, minimum width=5em,minimum height=3em,inner ysep=0pt,align=center}}
\tikzset{elpsg/.style={ellipse,draw=green!60, fill=green!5, very thick, minimum width=5em,minimum height=3em,inner ysep=0pt,align=center}}
        
\node (firm)[elpsb] at (0,0)   {Firms};
\node (offer)[elpsg] at (3.5,2) {Custom\\Prices};
\draw[blue,thick,behind path] ($(offer.north west)+(-0.7,0.7)$)  rectangle ($(offer.south east)+(0.7,-0.7)$);
\node (platform) at (3.5,3.5) {Platform};
\node (post)[elpsg] at (3.5,-2) {Posted\\Prices};
\node (onc)[elpsb] at (8,2) {Shoppers};
\node (offc)[elpsb] at (8,-2) {Loyals};

\draw[->,very thick,red] (firm) -- (offer);
\draw[->,very thick,red] (firm) -- (post);
\draw[->,very thick,red] (onc) -- (post);
\draw[->,very thick,red] (onc) -- (offer);
\draw[->,very thick,red] (offc) -- (post);

\node[align=center] at (0.7,-1.4) {own site /\\physical store};

\draw[<-,very thick,dashed,red] (post) -- (offer);
\node at (3.5,0) {showrooming};
\node at (6,-2.5) {(just one)};
\draw[->, thick,black,shorten <=3pt] (onc.north east) -- ++(.5,.5) node[black,right] {$J+1$ prices};
\draw[->, thick,black,shorten <=3pt] (offc.north east) -- ++(.5,.5) node[black,right] {$1$ price};

\end{tikzpicture}
\caption{Model Depiction}
\label{fig:model}
\end{figure}

\paragraph{Digital Advertising through the Lens of the Model}

Digital platforms offer a variety of advertisement formats, such as sponsored links, images, or videos. The content, often a product-price pair selected from the advertiser’s portfolio, can differ across media channels and is tailored to individual consumers. Advertisers face three key decisions: identifying the target users, selecting the appropriate advertisement for each user, and determining the bid for each user’s attention. The best strategy depends on the platform’s nature and the advertiser’s product line. For instance, a brand with multiple product lines would adopt a different approach than a single-product firm, adjusting its campaign according to the type of platform, e.g., search engines, social networks, or third-party publishers.

In our model, each firm offers a single product with fixed characteristics.
Thus, the content of an advertisement is limited to a specific brand and to
a personalized price. As such, our model is certainly an abstraction from the rich practice of digital advertising, where brands have multiple product lines and products have many features. 

At the same time, our model allows us to capture two crucial aspects features of real-world digital advertising: the platform’s ability to match consumers with their preferred firms and the value created through personalized pricing, which offers discounts to lower-value consumers who would not purchase at the monopoly price.

The model also accommodates a broader interpretation where each firm offers a range of products varying in quality and price. The platform’s information enables firms to guide each consumer to a different quality-price pair within their product line, a process known as \emph{product steering}. This process combines value creation and extraction, similar to our single-product model. As the variation in product quality diminishes (i.e., the products of each firm become more alike), product steering becomes akin to personalized pricing.\footnote{See \cite{bash22}, \cite{bebo22}, and \cite{tewr22} for recent models of
product steering on digital platforms.}\newpage

Finally, in our model,  personalized pricing is exclusive to the platform and only applies to the firm that wins the sponsored slot. This is due to the consumers’ unit demand and the significant difference in the information each firm possesses on- and off-platform. However, in real-world scenarios, multiple forms of price discrimination, such as market segmentation and nonlinear pricing, can occur both on and off the platform. In that sense, our model accentuates the differences between these two sales channels.

\section{Data-Augmented Auctions}

\label{section:dab}

In this section, the platform runs an auction to determine which firm makes
a personalized offer to each consumer. The platform provides the advertisers
with information about the characteristics of the consumer, summarized by
the vector of values $v$. Because the advertisers can make their bidding and
pricing decisions contingent on the information disclosed by the platform,
we refer to this mechanism as \emph{data-augmented bidding}.%

\subsection{Data-Augmented Bidding: Mechanism}

The platform runs a second-price auction  for each realized consumer value $v$ separately. In each auction, the platform  enables the advertisers to condition bids and sponsored prices on the consumer's value. Formally, each advertiser $j$ adopts a bidding strategy 
\begin{equation*}
b_{j}:V^{J}\rightarrow \mathbb{R_{+}},
\end{equation*}%
and a sponsored pricing strategy 
\begin{equation*}
p_{j}:V^{J}\rightarrow \mathbb{R_{+}}.
\end{equation*}
This game proceeds as follows. First, all firms (simultaneously) post off-platform prices $\bar{p}_j$. Second,
each firm $j$ submits a bid function $b_j:V^J\to\mathbb{R}_+$ and a
sponsored price function $p_j:V^J\to\mathbb{R}_+$ to the platform. Third, a
consumer value $v$ is realized, and a second-price auction (with no reserve)
determines which firm $j$ and which price $p_j(v)$ are advertised to the
consumer. We characterize the symmetric Bayesian Nash equilibria of the bidding and pricing game among the advertisers. 

\subsection{Data-Augmented Bidding: Practice}

\label{section:dabpractice}

Manual bidding is the original mechanism for selling advertising online and is still in use, though it is becoming less common.\footnote{See, for example, \url{https://support.google.com/google-ads/answer/2390250}. At the
same time, Google also suggests to advertisers that setting bids manually
may result in lower performance, as reported in \url{https://growthmindedmarketing.com/blog/google-ads-mistakes-new-campaigns}.}
When bidding manually, an advertiser typically specifies their willingness to pay for a click on a search result, display ad, or sponsored product listing. The advertisers can also modify their bids and their messages according to the platform's information on each consumer. Thus, the platform monetizes its data through an indirect sale of information \citep{adpf90,bebo19}, whereby
advertisers can act \textit{as if} they had direct access to the consumers' characteristics. In our model, where the platform has complete information about the consumer's preferences, the entire value profile $v$ acts as a \emph{targeting category}.%

The rules for the allocation of sponsored placements vary across digital platforms and publishers. 
Broadly speaking, second-price mechanisms are used by the digital platforms on their own websites, e.g., by Google for sponsored search on google.com \citep{edos07}, by Meta on its social networks, and by Amazon for its sponsored product listings. By contrast, the pricing of display advertising by third party publishers such as nytimes.com or
wsj.com, which is often mediated by the digital platforms, has recently seen a transition from second price auctions to first-price auctions.\footnote{This shift is largely due to the organization of this market through ad exchanges \citep{gwms22}.} 
In what follows, we
focus on the second-price data-augmented auction for its prevalence in digital advertising and its simplicity.
\footnote{%
Yet the details of the auction do not matter for our characterization of
equilibrium bids and prices. Indeed, as the platform enables the firms to
bid in a complete-information auction for each consumer type, both the
winner and the price paid for each $v$ are identical in a first- and
second-price auction. Thus, the platform revenue and the equilibrium posted price are also equivalent for our data-augmented auctions.} 

\subsection{Data-Augmented Bidding: Equilibrium}

To help characterize the firms' equilibrium bidding and pricing strategies
for this setting, we first establish a useful property. Proposition \ref%
{prop:bidding_efficiency} below shows that, regardless of the
prices posted by the firms off the platform, a bidding equilibrium in
undominated strategies results in a \textit{symmetric and efficient}
assignment of on-platform consumers. In other words, each on-platform
consumer sees a sponsored offer from the firm they like best.\newpage


\begin{proposition}[Efficient Bidding Outcome]
\label{prop:bidding_efficiency}

Fix a profile of posted prices $\overline{p}$ and consider an on-platform consumer with value $v$. If $v_j > v_k$, firm $j$ bids at least as much as firm $k$ for consumer $v$ in any bidding equilibrium in undominated strategies.
\end{proposition}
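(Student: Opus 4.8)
The plan is to reduce the claim to a monotonicity property of the \emph{value of the sponsored slot} and then invoke the logic of the second‑price auction. Fix the posted prices $\overline p$ and a shopper with value $v$. If firm $i$ holds the slot, the shopper's best alternative to the sponsored offer is to buy some rival $l\ne i$ at its posted price (or not buy), which is worth $u_{-i}(v):=\max\{0,\ \max_{l\ne i}(v_l-\overline p_l)\}$ to the shopper; hence, subject to the showrooming cap $p_i(v)\le\overline p_i$, firm $i$ optimally charges $\min\{\overline p_i,\ v_i-u_{-i}(v)\}$ whenever this is nonnegative, earning
\[
\pi_i(v)\;=\;\max\bigl\{\,0,\ \min\{\overline p_i,\ v_i-u_{-i}(v)\}\,\bigr\}.
\]
Since a firm's bid affects only whether it wins the slot — the auction payment and the sponsored price enter its payoff separately, and shoppers never observe bids — it is (weakly) dominant in the second‑price auction for firm $i$ to bid exactly $\pi_i(v)$, so in any bidding equilibrium $b_i(v)=\pi_i(v)$. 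It therefore suffices to show that $v_j>v_k$ implies $\pi_j(v)\ge\pi_k(v)$.

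For the monotonicity I would use an undercutting argument. If $\pi_k(v)=0$ the inequality is immediate, so suppose $\pi_k(v)>0$ and write $p_k:=\pi_k(v)\in(0,\overline p_k]$; by construction the shopper buys firm $k$'s sponsored offer at this price, i.e. $v_k-p_k\ge u_{-k}(v)$. Now let firm $j$ hold the slot and post $p_j:=p_k+(v_j-v_k)>p_k$. Two things make this work. First, $p_j$ respects showrooming: because $j\ne k$, the bound $v_k-p_k\ge u_{-k}(v)\ge v_j-\overline p_j$ rearranges to exactly $p_j\le\overline p_j$, so firm $k$'s own sale condition guarantees feasibility of the shifted price for firm $j$. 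Second, the shopper still takes the sponsored offer: $v_j-p_j=v_k-p_k$, and this quantity dominates every term of $u_{-j}(v)$ — it is $\ge 0$ and $\ge\max_{l\ne j,k}(v_l-\overline p_l)$ because $v_k-p_k\ge u_{-k}(v)$, and it is $\ge v_k-\overline p_k$ because $p_k\le\overline p_k$ — so $v_j-p_j\ge u_{-j}(v)$. Hence firm $j$ secures $p_j>p_k$, giving $\pi_j(v)\ge p_j>\pi_k(v)$, and in particular $b_j(v)=\pi_j(v)\ge\pi_k(v)=b_k(v)$.

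The one delicate point is the bookkeeping in the second step: one must check that shifting firm $k$'s sale‑clinching price up by the value gap $v_j-v_k$ both stays below firm $j$'s posted price — which is pleasantly automatic from $k$'s sale condition rather than an extra hypothesis — and still clears the shopper's outside option, even though $u_{-j}$ and $u_{-k}$ are not the same object (they swap the rival‑$k$ term $v_k-\overline p_k$ for the rival‑$j$ term $v_j-\overline p_j$). Everything else is routine. One caveat I would flag in the write‑up: a second‑price auction admits degenerate equilibria in weakly dominated strategies (a low‑value firm overbidding above every rival's slot value), so the statement should be read for equilibria in undominated strategies — equivalently, the equilibrium isolated in Theorem~\ref{thm:bidding_eq} — in which $b_i(v)=\pi_i(v)$.
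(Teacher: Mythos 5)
Your proof is correct, and while it starts from the same reduction as the paper---in a second-price auction each firm bids its value of the slot, namely the showrooming-capped price $\max\{0,\min\{\overline p_i,\,v_i-u_{-i}(v)\}\}$ it can charge after conceding the consumer's best outside option---the way you establish the comparison $\pi_j(v)\ge\pi_k(v)$ is genuinely different. The paper proves it by exhaustive casework: it splits on $\overline p_k\gtrless\overline p_j$ and on whether the values exceed the posted prices (six subcases in all), writing out and comparing the explicit bid formulas in each configuration. You instead give a single uniform argument: take firm $k$'s sale-clinching price $p_k=\pi_k(v)$, shift it up by the value gap to $p_j=p_k+(v_j-v_k)$, and verify that $k$'s own sale condition $v_k-p_k\ge u_{-k}(v)$ simultaneously delivers feasibility $p_j\le\overline p_j$ (since $u_{-k}$ contains the term $v_j-\overline p_j$) and sale-clinching for $j$ (since $v_j-p_j=v_k-p_k$ dominates every term of $u_{-j}$, the swapped term $v_k-\overline p_k$ being handled by $p_k\le\overline p_k$). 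This buys a shorter, case-free proof and even a strict inequality whenever $\pi_k(v)>0$, at the cost of not displaying the explicit equilibrium bid expressions that the paper's casework produces and later leans on informally. Your closing caveat about equilibria in weakly dominated strategies is fair; the paper makes the same implicit restriction when it asserts that bidding one's slot value is weakly dominant (both treatments likewise set the losing firm's payoff from this consumer to zero, relying on the winner conceding exactly the outside option and ties breaking toward the sponsored offer), so this is not a gap relative to the paper's own standard of rigor.
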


\paragraph{Proof of Proposition \ref{prop:bidding_efficiency}} Fix the vector of posted prices $\bar{p}$ and consider a second-price auction for a consumer with value $v$. In this auction, it is weakly dominant for each firm $j$ to bid up to the price it would charge if it won the sponsored slot, $b_j(v)=p_j(v)$. Having observed all posted prices $\bar{p}$, each firm knows that consumer $v$ has the option to buy from the most attractive
off-platform offer, 
\begin{equation}\label{eq:outside}
\underline{u}(v,\bar{p})\triangleq\max_{k=1,...,J}\left(v_k-\bar{p}_k\right)_{+},
\end{equation}
where $(\cdot)_+$ denotes the nonnegative part throughout the paper.

Given this outside option, firm $j$ can offer consumer $v$ (and therefore bid) $$p_j(v)=b_j(v)=\left(v_j-\underline{u}(v,\bar{p})\right)_+,$$
which in particular implies that the showrooming constraint $p_j(v)\leq\bar{p}_j$ is satisfied.

Because the outside option $\underline{u}$ in \eqref{eq:outside} is common to all firms, the
highest-value firm $j=\arg\max_k v_k$ can offer consumer $v$ the highest price $p_j$ and
still make a sale on the platform. Consequently, the highest-value firm also makes the highest bid $b_j$.\hfill\qed


Proposition \ref{prop:bidding_efficiency}  allows us to
separate the outcome of the bidding stage from the posted prices: the equilibrium matches in the bidding game are invariant with respect to the
posted prices. Our main result in this section (Theorem \ref{thm:bidding_eq}), uses this property to characterize the unique symmetric equilibrium (in undominated strategies) of the data-augmented auctions and the associated posted price $p_B$. 

\begin{theorem}[Symmetric Equilibrium]
\label{thm:bidding_eq} There exists a symmetric equilibrium in undominated strategies. In any such  equilibrium:
\begin{enumerate}
\item Consumer $v$ receives and buys a sponsored offer from $j= \arg \max_k
v_k$.
\item Each firm $k$ posts price $\bar{p}_{k}=p_{B}$ satisfying 
\begin{equation}
(1-\lambda )(1-F(p_{B})-p_{B}f(p_{B}))+\lambda J\int_{p_{B}}^{\bv }F^{J-1}(v-p_{B})dF(v)=0.  \label{eqn:bidding_price}
\end{equation}%
Further, when there is a unique solution to \eqref{eqn:bidding_price}, the
symmetric equilibrium  in undominated strategies is unique.\footnote{%
The proof shows  the symmetric equilibrium profit level is unique and the  symmetric equilibrium posted price is generically unique. When there are multiple equilibrium prices, $p_{B}$ denotes the highest such price.}

\item Firm $j = \arg \max_k v_k$ bids $b_j(v)=p_j(v) = \min(v_j,p_B)$.

\item Firm $j \neq \arg \max_k v_k$ bids $b_j(v)=p_j(v)=(v_j-%
\max_{k}(v_k-p_B)_{+})_+$.
\end{enumerate}
\end{theorem}


The proofs of all results, unless noted otherwise, are collected in the Appendix. To gain intuition for the firm's trade-off when posting a price $\bar{p}_{k}$ , consider the
case of a single bidder $(J=1)$. Because the losing bid is trivially nil,
the profit of the firm is then 
\begin{equation}  \label{eq:mon_pi}
(1-\lambda)(1-F(p))p+\lambda \int_{\underline{v}}^{\bar{v}}\min\{v,p\}dF(v).
\end{equation}
We can therefore view the optimal posted price as solving a monopoly profit
maximization problem, plus a second term capturing the on-platform benefit
which is increasing in $p$.


In our setting with $J$ competing firms, equation \eqref{eqn:bidding_price} is the first-order condition for the competitive analog to the monopoly profit \eqref{eq:mon_pi}. The equilibrium posted price $p_B$ balances the winning firm's profit on the two sales channels. By showrooming, the posted price sets an upper bound on the prices that can be advertised to the on-platform consumers. Therefore, the potential to price discriminate more effectively on-platform pushes firms to raise their posted prices. This effect is captured by the second term in the first-order condition \eqref{eqn:bidding_price}, which is positive.%

While it is intuitive that higher posted prices enable higher advertised
prices, equation \eqref{eqn:bidding_price} illustrates a more
nuanced, important property of data-augmented auctions. A marginal increase in $\bar{p}_j$ above $p_B$ benefits firm $j$ only if (i) a consumer values firm $j$'s product at least $p_B$, and (ii) a consumer values all other brands $k$ less than $v_j-p_B$, so that the second highest bid $b_k(v)$ is nil. If the second condition is not met, i.e., if the auction is sufficiently competitive, the second highest bid is given by $b_k(v)=\bar{p}_j+v_k-v_j$ as in part (4.) of Theorem \ref{thm:bidding_eq}. This bid equals the price that firm $k$ would advertise if it won the auction. 
Critically, firm $k$'s bid increases one-to-one with firm $j$'s  posted price $\bar{p}_j$: firm $k$ bids more aggressively for consumer $v$ when winning the auction would enable it to charge a higher price and still make a sale. Thus,  a higher posted price relaxes showrooming but makes a firm a softer competitor in the more competitive auctions, which dampens the effect of raising the price in the first place.

\subsection{Welfare Implications}

We now discuss the welfare implications of data-augmented bidding. Theorem %
\ref{thm:bidding_eq} shows that the on-platform allocation is socially
efficient: every consumer participates and buys  their favorite
product. Relative to the on-platform channel, the off-platform market
suffers from two sources of inefficiency: first,
consumers are loyal to a random firm, i.e., they might be unaware of the
existence of a firm  they prefer; and second, since firms optimally post
a single off-platform price, those consumers with values below the posted
price do not buy at all.


Turning to the welfare implications for consumers, part (3.) of Theorem \ref{thm:bidding_eq}
shows that  the winning
firm extracts all consumer surplus on-platform, up to the equilibrium posted price. Thus, the expected  surplus of an off-platform and an on-platform consumer are given by
\begin{equation}
    CS_{\off}(p_B) = \int_{p_B}^{\bv} (v-p) \, d F(v),\quad\text{and}\quad CS_{\on}(p_B) = \int_{p_B}^{\bv} (v-p) \, dF^J(v).\label{eq:cscomp}
\end{equation}
On both channels, only consumers with values above $p_B$
obtain a positive surplus. 

To capture the effect of the platform on consumer surplus, we then consider  the posted prices. We first define $p_M$ as the monopoly price for distribution $F$,
\begin{equation}  \label{eqn:offline_price}
p_M \triangleq \arg\max_p p\,(1 - F(p),
\end{equation}
and we assume $p_M>\uv$ throughout.

All firms would post price $p_M$ if  they  had a loyal off-platform population only, as can be seen by setting 
$\lambda=0$ in \eqref{eqn:bidding_price}. For any $\lambda>0$, the
second term on the right-hand side of \eqref{eqn:bidding_price} pushes the
equilibrium price $p_B$ above $p_M$. Formally, part (1.) of Proposition \ref{prop:bidding_price_larger} uses a monotone comparative statics argument (which  we shall invoke repeatedly) to  show that the equilibrium price is increasing in $\lambda$. Therefore, posted prices are larger than the monopoly price $p_M$. We  trace out the welfare implications of this result in part (2.) of  Proposition \ref{prop:bidding_price_larger}.

\begin{proposition}[Posted Prices and Welfare Effects]\label{prop:bidding_price_larger}\strut \vspace{-0.7cm}
\begin{enumerate}
    \item The symmetric equilibrium posted price  $p_B$ is increasing in $\lambda$.
    \item Off-platform per capita total surplus and consumer surplus are both decreasing in $\lambda$. 
\end{enumerate}
\end{proposition}

In traditional models of search stemming from brick and mortar stores with non-posted prices, the increased presence of consumers who obtain more quotes has a positive externality on the other consumers. Our model generates the opposite prediction because  the growth of the platform is unambiguously harmful for off-platform
consumers. 

In contrast, the effect of $\lambda$ on on-platform consumer surplus is more nuanced. Because  every on-platform consumer is matched with their favorite firm (as captured by the distribution $F^J$), the expected welfare of an on-platform consumer in \eqref{eq:cscomp} is always larger than an off-platform consumer's. Moreover, every consumer gains weakly ex post  (after learning their $v$) by joining the platform.  

This creates an important participation externality, however, because more consumers joining the platform increases $\lambda$, which raises all off- and on-platform prices.\footnote{In recent work, \cite{kiph21} and \cite{bebg22} document the externalities that consumers impose on each other through their decisions to share data with a two-sided platform.} As $\lambda\to 1$, the equilibrium posted price $p_B \to \bar{v}$. This means
total surplus is at the first-best level, but the firms extract all consumer surplus on and off the platform.



\section{Managed Advertising Campaigns}

\label{sec:managed_campaign}

We now contrast data-augmented auctions with the more novel 
auto-bidding and managed-campaign mechanisms. In a \emph{managed} advertising campaign, the platform determines which firm wins the sponsored slot for each consumer value and makes an offer to the consumer on behalf of that firm. The platform collects a fixed upfront fee for this service from each participating firm. In turn, the firms relinquish agency over the on-platform allocation process, but
they still collect the resulting revenue and post the off-platform prices.

\subsection{Managed Campaigns: Mechanism}

\label{section:mc}

A managed campaign is a mechanism where the platform
conditions the advertised products and prices on all available information:
the consumer's value $v\in V^J$, the firms' participation decisions in the mechanism $a\in\{0,1\}^J$,
and the posted prices $\bar{p}\in\mathbb{R}^J_+$. We thus consider the following extensive form.

\begin{enumerate}
\item The platform proposes a mechanism $(s, p, T)$ to all firms, where $%
s:V^J\times\{0,1\}^J\times\mathbb{R}_+^J\to J$ is a \textit{steering policy}%
, $p:V^J\times \{0,1\}^J\times\mathbb{R}_+^J\to\mathbb{R}_+$ is a \textit{%
pricing policy}, and $T \in \mathbb{R}^J_+$ is a profile of fixed fees (advertising budgets).

\item The firms simultaneously decide whether to accept $(a_j=1)$ or reject $%
(a_j=0)$ the platform's offer and what off-platform price $\overline{p}_j$
to post.

\item If firm $j$ accepts the platform's offer, it pays a fee $%
T_j$. Its product is offered to a subset of on-platform consumers according
to the steering policy $s$ and priced according to the policy $p$. 
\end{enumerate}

In other words, the steering policy steers each consumer to a firm, depending on all firms' participation decisions, their posted prices, and the consumer's own value. The pricing
policy maps those same variables into an advertised price. In the remainder of this section, we focus on a specific instance of a managed campaign and then show that these pricing and steering policies are revenue optimal for the platform.




\begin{definition}[Best-Value Pricing]
The \textit{best-value pricing} policy sets 
\begin{equation}  \label{eq:best_value_pricing}
p(v, a, \bar{p}) = \min(v_j, \bar{p}_j, \min_{k \neq j}(v_j - v_k + \bar{p}_k))_+,
\end{equation}
where $j=s(v, a, \bar{p})$ is the firm selected by the platform's steering
policy.
\end{definition}

Correspondingly, the efficient steering policy selects the consumer's favorite firm among those that participate in the mechanism.

\begin{definition}[Efficient Steering]
\label{eq:eff_steer} The efficient steering policy sets $s(v,a,\bar{p})=\arg\max_{j}a_{j}v_{j}.$
\end{definition}


When combined with efficient steering, the best-value pricing policy ensures that the sponsored firm's advertised price $p_j(v)$ yields the best value to
each  consumer that likes product $j$ the best, so that no other firm can poach the consumer by posting a lower price $\bar{p}_k$. In this sense, the best-value pricing guarantee \eqref{eq:best_value_pricing} is stronger than a most-favored-nation clause that ensures firms offer their goods at lower prices on- than off-platform: it guarantees that the sponsored firm will compete aggressively with any deviating firm.

\subsection{Managed Campaigns: Practice}

We now discuss the connections between the model and real-world
platforms. Indeed, all three key elements of our model of managed campaigns
connect to the current practices of large digital platforms.

First, relative to data-augmented auctions, ex-ante fixed fees replace individual, per-auction payments. In the
model, the fixed fees represent advertising budgets that firms submit to the
platform. This is the predominant mechanism on pure advertising platforms,
such as Google, Facebook, or Tiktok that match advertisers and consumers,
but do not charge any transaction fees. In all these markets, the firms delegate the spending decisions to the platform, subject to constraints on the returns to their investment.\footnote{Our model is also consistent with this type of arrangement. After Theorem \ref{thm:best_value_max}, we discuss a  sense in which the optimal mechanism promises firms a strictly positive return on their on-platform advertising spending.}

By contrast, retail platform such
as Amazon or Instacart typically receive revenue from a mixture of advertising and sales commissions. Significantly, Amazon's advertising revenue is catching up to those of Google and Meta  \citep{insi23}, which suggests that the relevance of the advertising mechanisms is extending to
retail platforms. Other retail platforms, most notably Alibaba, have very low sales commissions and generate most of their revenue from sponsored listings. In particular, Alibaba's  Taobao operates as a fee-free consumer-to-consumer marketplace where users can pay to rank higher in the search results, thus generating all its revenue from advertising \citep{inve21}.

Second, the platform controls both the allocation of sponsored slots and the prices of the firms' products. Many advertisers run advertising campaigns that target different consumers with promotional offers, which can involve personalized prices, varieties, or product versions. For example, Amazon and Google offer portfolio bidding strategies, which
consist of \textquotedblleft AI-powered, goal-driven bid strategies that
help you optimize bids across multiple campaigns,\textquotedblright\ i.e.,
that choose which offers to target to which user.\footnote{%
See 
\url{https://support.google.com/google-ads/answer/6263058} for Google's description of portfolio bidding and 
\url{https://advertising.amazon.com/blog/introducing-portfolios-for-sponsored-ads} for Amazon's version.} Likewise, Meta's Advantage+ Catalog Ads automatically delivers relevant product recommendations to people based on their revealed  intent. Meta describes this service  as follows.
\begin{quote}
\textquotedblleft You can create a catalog with all your products and create one campaign that drives sales on your website or app. When someone expresses interest in an item from your catalog [or in the types of products or
services you are offering], Meta can dynamically generate an ad for that person and deliver it automatically on mobile, tablet and desktop.\textquotedblright \footnote{%
Meta offers two targeting options that reach customers across different
stages of the buying journey, depending on whether they have manifested
interest in an advertiser's own products or in their industry's products.
See \url{https://www.facebook.com/business/help/397103717129942}.}
\end{quote}

In our model (see the discussion at the end of
Section \ref{sec:model}), each firm sells a single product, and therefore the platform's choice of personalized advertising content reduces to a targeted promotional discount.\newpage

Third, the platform conditions the advertised prices on all off-platform posted prices. 
Real-world managed-campaign algorithms such as Google's Performance Max and Meta's Advantage+ can be viewed as implementing our static mechanism by adapting behavior over time. The connection is as follows. Google's algorithm \textquotedblleft uses Google
AI across bidding, budget optimization, audiences, creatives, attribution, and more.\textquotedblright \footnote{For a detailed description, see
\url{https://support.google.com/google-ads/answer/10724817}.} Thus, the algorithm
adjusts not only the automated bids, but also the creative content shown to
each consumer in order to achieve the campaign goals. For example, if the algorithm detects a drop in clicks on a given advertisement by a certain consumer segment, it can advertise a cheaper product to those consumers, so to improve sales. In practice, this adjustment process occurs gradually.\footnote{%
The initial learning phase of a portfolio advertising campaign mechanism
usually lasts a few weeks. See, for example,
\url{https://support.google.com/google-ads/answer/13020501}. Any subsequent
adjustment period is presumably shorter.} In our static model, the
adjustment is instantaneous: the
platform modifies the advertised prices as soon as a firm deviates from the equilibrium posted price. 

Table \ref{tab:2x2table} offers two  interpretations of our model that summarize the above discussion. The narrow interpretation is the focus of our model, the broader interpretation links our model to a more extensive set of practices and tools. 
\begin{table}[htbp]
\renewcommand{\arraystretch}{1.5} \centering
\begin{tabular}{|c|c|c|}
\hline
& \textbf{Narrow} & \textbf{Broad} \\ \hline
\textbf{Product Line} & Single product & Multiple products \\ \hline
\textbf{Targeting} & Personalized pricing & Product steering \\ \hline
\textbf{Algorithm} & 
\parbox[c]{4.5cm}{\centering Advertised price reacts to
all posted prices} & 
\parbox[c]{6.5cm}{\centering ``Portfolio bids'' and
``catalog ads'' select most profitable product} \\ \hline
\textbf{Timing} & One-time pricing & 
\parbox[c]{5cm}{\centering Learning
over time} \\ \hline
\end{tabular}%
\bigskip
\caption{{\protect\normalsize Mapping the model to real-world managed
campaigns}}
\label{tab:2x2table}
\end{table}



\subsection{Managed Campaigns: Equilibrium with Best-Value Pricing}

A key consequence of best-value pricing and efficient steering is that firms are insulated from competition. We now characterize the symmetric equilibria of the managed campaign with efficient steering and best-value pricing.\newpage

\begin{theorem}[Best-Value Pricing Managed Campaign Equilibrium]
\label{thm:managed_eq_best_value}\strut
In any symmetric equilibrium with efficient steering and
best-value pricing:
\begin{enumerate}
\item On-platform consumers $v$ with $v_j=\max_{k}v_k$ buy from firm $j$ at $%
p_j(v)=\min\{v_j,p_V\}$.

\item The posted price $p_{V}$ is characterized by the following
equation: 
\begin{equation}
(1-\lambda )(1-F(p_{V})-p_{V}f(p_{V}))+\lambda J\int_{p_{V}}^{\bv }F^{J-1}(v)dF(v)=0.\label{eq:managed_price_best_value}\end{equation}%
\end{enumerate}
Further, when there is a unique solution to %
\eqref{eq:managed_price_best_value}, the symmetric equilibrium is unique.\footnote{As in Theorem \ref{thm:bidding_eq}, there is a unique symmetric equilibrium profit level and a generically unique  symmetric equilibrium price. If there are multiple equilibrium prices, $p_V$ denotes the largest such price.}
\end{theorem}

The best-value pricing policy \eqref{eq:best_value_pricing} ensures that each firm  makes a sale to its favorite customers regardless of the posted prices. Therefore, part (1.) shows that each firm sets its posted price like a monopolist with exogenous market segments, subject to  showrooming.

The characterization of the equilibrium $p_V$ in \eqref{eq:managed_price_best_value} follows from the first-order condition for each firm's profit. Indeed, each firm posts a
price that balances the profit off-platform with the relaxation of the showrooming constraint. In particular, the second term in 
\eqref{eq:managed_price_best_value} shows that a marginal  posted price increase yields a one-for-one benefit to firm $j$ when facing any consumer that values product $j$ the most  and values it more than $p_V$.

As it turns out, best-value pricing is revenue-optimal for the platform.
Moreover, it attains an exogenous upper bound on the platform's revenue across all  managed campaigns.

\begin{theorem}[Optimal Managed Campaign]\label{thm:best_value_max}
There exists an equilibrium of the best-value pricing managed campaign with efficient steering that:\vspace{-.05in}
\begin{enumerate}
\item maximizes revenue for the platform among all steering and pricing
policies;

\item attains the integrated (collusive) gross profit for the firms.
\end{enumerate}
\end{theorem}

The argument proceeds by considering the problem of a vertically integrated platform that jointly maximizes the profit of the firms and the platform. The vertically integrated platform can jointly coordinate on-platform and off-platform pricing but still faces the showrooming constraint due to
consumer search. The optimal joint solution is then decentralized by charging a  fixed fee that
extracts all of the firms' surplus, net of an exogenous outside option for the firms.\footnote{Alternatively, we can decompose the advertising budget into a payment per winning bid for each consumer value. In this case, one can show that the bidding algorithm boosts the bids of the advertisers, but never beyond the value of the match. Thus, the auto-bidding mechanism satisfies an ex-post participation constraint for every (winning and losing) bid.}

The firms' outside option consists of not participating in the mechanism and posting a price that maximizes profit on the captive consumers and competes with the advertised prices set by  the best-value pricing policy. This outside option yields a profit level 
\begin{equation}\label{eqn:outside}
\Pi _{O}=\max_{p}\left\{ \frac{1-\lambda }{J}p(1-F(p))+\lambda
\int_{p}^{\bv}pF^{J-1}(v-p)\ dF(v)\right\} .
\end{equation}

The first term in \eqref{eqn:outside} is the profit from
selling to loyal consumers. The second term denotes the
profit the firm makes due to the ability of on-platform consumers to search: upon rejecting the platform's offers, any firm could still make a sale to any on-platform consumer  with a sufficiently high value for its product. In this case, the
best-value pricing policy, which attempts to sell the second-highest valued product to the consumer, charges a  price of zero. The deviating firm then makes the sale when the consumer's value $v$ satisfies $v-p>v^{\prime }$, where $v^{\prime }$ is
the value for the best competitor. The optimal campaign cannot therefore charge firms their entire revenues on the platform. In this sense, the mechanism could be framed  as  delivering a positive return on  investment. 


\subsection{Comparing Advertising Mechanisms}

We now compare the equilibrium posted prices and the welfare implications
under the data-augmented second price auction
and the optimal  managed campaign. We begin with the
comparison of the prices off the platform where we refer to pricing
equations \eqref{eqn:bidding_price} and \eqref{eq:managed_price_best_value}.

\begin{theorem}[Welfare and Posted Price Comparison]
\label{thm:managed_prices_highest_variant} The posted price $p_{V}$ in the
optimal managed campaign is higher than the posted price $p_{B}$ under
data-augmented bidding: 
\begin{equation*}
p_{V}\geq p_{B}\geq p_{M}.
\end{equation*}%
Total consumer surplus and total welfare are lower in the optimal managed
campaign than under data-augmented bidding.
\end{theorem}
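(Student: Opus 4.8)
The plan is to split the statement into two parts: the price comparison $p_V \ge p_B \ge p_M$, and the welfare comparison. For the price comparison, the inequality $p_B \ge p_M$ is already established in the text following \eqref{eqn:bidding_price} (the right-hand side of the bidding equation exceeds the right-hand side of the monopoly equation pointwise). So the new content is $p_V \ge p_B$. The natural approach is to compare the two implicit equations \eqref{eqn:bidding_price} and \eqref{eq:managed_price_best_value} term by term. Both have the form $p = \frac{1 - F(p) + \frac{\lambda J}{1-\lambda} G(p)}{f(p)}$, with $G_B(p) = \int_p F^{J-1}(v-p)\,dF(v)$ for bidding and $G_V(p) = \int_p F^{J-1}(v)\,dF(v)$ for the managed campaign. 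Since $F$ is increasing, $F^{J-1}(v-p) \le F^{J-1}(v)$ for every $v$ in the range of integration (as $p \ge 0$), hence $G_V(p) \ge G_B(p)$ for all $p$. Thus the map whose fixed point is $p_V$ dominates pointwise the map whose fixed point is $p_B$.

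To conclude $p_V \ge p_B$ from this pointwise domination, I would invoke the same monotone-fixed-point / single-crossing argument the paper uses to pin down uniqueness of the symmetric equilibrium: define $\Phi_B(p) = p f(p) - 1 + F(p) - \frac{\lambda J}{1-\lambda}G_B(p)$ and $\Phi_V(p)$ analogously; each is increasing in $p$ (this is exactly the second-order/log-concavity content ensuring a unique root), $\Phi_V \le \Phi_B$ pointwise, $\Phi_B(p_B)=0$, so $\Phi_V(p_B) \le 0$, and since $\Phi_V$ is increasing with root $p_V$ we get $p_B \le p_V$. The one point needing care is confirming monotonicity of $\Phi_B, \Phi_V$ in $p$ — i.e. that these FOCs are genuinely single-crossing — but this is precisely what log-concavity of $f$ buys us and what the paper has already asserted in claiming uniqueness of each equilibrium, so I would cite that.

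For the welfare comparison, the cleanest route uses Corollary \ref{corr:vertical_integration} together with the off-platform welfare formulas. Total welfare is $W(p) = (1-\lambda)W_\off(p) + \lambda W_\on$, and only the off-platform term depends on $p$; since $W_\off(p) = \int_p^1 v\,dF(v)$ is strictly decreasing in $p$, and $p_V \ge p_B$, total welfare is lower under the managed campaign. For consumer surplus: $CS(p) = (1-\lambda)CS_\off(p) + \lambda CS_\on(p)$, and both $CS_\off(p) = \int_p^1 (v-p)\,dF(v)$ and $CS_\on(p) = \int_p^1(v-p)\,dF^J(v)$ are decreasing in $p$ (the integrand and the region both shrink), so again $p_V \ge p_B$ gives lower consumer surplus under the managed campaign.

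The main obstacle is the monotonicity/single-crossing step underpinning $p_V \ge p_B$: pointwise domination of the two fixed-point maps only yields the ordering of fixed points if one has a suitable monotone structure, and verifying that $\Phi_B$ and $\Phi_V$ are increasing requires differentiating $G_B, G_V$ and using log-concavity of $f$ — the same computation behind the uniqueness claims. I would either quote the uniqueness argument directly or, if a self-contained version is wanted, note that $\frac{d}{dp}\big[1 - F(p) + \frac{\lambda J}{1-\lambda}G(p)\big]$ is negative (using \eqref{eqn:onplatform_profit_deriv}-type identities) while $p f(p)$ crosses it once from below by log-concavity. Everything else — the term-by-term inequality $G_V \ge G_B$ and the monotonicity of the welfare objects in $p$ — is immediate.
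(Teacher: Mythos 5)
Your proposal is correct and follows essentially the same route as the paper: compare the marginal-profit (first-order-condition) expressions for the bidding and best-value managed-campaign problems, note the pointwise dominance $F^{J-1}(v)\ge F^{J-1}(v-p)$ to get $p_V\ge p_B$ (with $p_B\ge p_M$ imported from Proposition \ref{prop:bidding_price_larger}), and then use monotonicity of $W$ and $CS$ in the posted price. Your extra discussion of the single-crossing/monotone-fixed-point step just makes explicit what the paper asserts implicitly via concavity/uniqueness of the symmetric equilibrium, so it is a refinement of the same argument rather than a different one.
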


In our model,  the impact of digital advertising auctions on product prices is entirely due to the different competitive responses under the data-augmented auctions and the managed campaign mechanism. Theorems \ref{thm:bidding_eq} and \ref{thm:managed_eq_best_value} showed that both data-augmented auctions and the optimal managed
campaign yield an efficient matching of all on-platform consumers to firms.
Moreover, under both mechanisms, the on-platform consumers buy from their
favorite firm (say, $j$) at a price $p_{j}(v)=\min \{v_{j},\bar{p}_{j}\}$.
Therefore, both mechanisms create a common benefit of raising the posted
price $\bar{p}_{j}$, namely to increase revenue on all consumers that like
firm $j$ best \textit{and} value product $j$ more than the posted price.

However, with data-augmented auctions, raising $\bar{p}_j$ raises all
rivals' bids $b_k$ by the same amount, because any firm $k\neq j$ that wins the auction can now charge a higher price and
still induce the consumer to buy its product rather than shop for firm $j$%
's. (Recall the discussion after Theorem \ref{thm:bidding_eq}.) Therefore, raising the posted price $\bar{p}_j$ helps
firm $j$ only if all other bids are nil $(b_{k\neq j}=0)$, which occurs when the consumer $v$ is willing to pay a large enough premium for firm $j$'s product, i.e., $v_j>\bar{p}_j+\max_{k\neq j}v_k$. The latter effect is absent under the optimal managed campaign, where fixed fees replace variable, endogenous payments for each consumer. Therefore, raising posted prices is more profitable under managed campaigns.\footnote{Consistent with this intuition, the proof of Theorem \ref{thm:managed_prices_highest_variant} uses a monotone comparative statics argument that does not require assumptions on the distribution of consumer values.}

Finally, our model admits a formal notion of  ``advertising cost pass-through'' driven by  the mechanisms  for selling ads, namely as platforms move from data-augmented bidding to managed campaigns. To compute this measure of pass-through,  we fix the bargaining power of the platform to that of the sophisticated campaign; that is, the platform charges fees in order to hold firm profit to their outside option. We take these fees as a proxy for advertising cost, and compare the firm net transfer under data-augmented bidding to the transfer under the optimal sophisticated campaign. More precisely, let $T_B$ be the total transfer paid by an individual firm under revenue-maximizing data-augmented bidding,\footnote{To fix the bargaining power of the platform, we consider the data-augmented bidding where the platform can charge a participation fee before bidding occurs. This results in no changes to the pricing outcome and makes the platform's bargaining power comparable.} and let $T_V$ be the total transfer paid by an individual firm under the sophisticated managed campaign. Then we define the ``pass-through'' of the change in mechanisms as:
\begin{equation}
    \eta = \frac{p_V - p_B}{T_V - T_B},
    \label{def:passthrough}
\end{equation}
as how the change in advertising impacts the off-platform posted prices. We can thus characterize the pass-through more formally.

\begin{proposition}[Advertising Mechanism Pass Through]
    \label{prop:passthrough}
    The pass-through rate satisfies $\eta > J$. The increase in advertising costs induced by a managed campaign relative to bidding are reflected by an amplified increase in off-platform prices.
\end{proposition}

Proposition \ref{prop:passthrough} shows that the pass-through of costs is greater the more firms there are. In particular, prices rise more dramatically with more competitors as a result of switching mechanisms precisely because the sophisticated managed campaign softens competition.

\section{Policy Interventions}

In this section, we investigate the impact of potential interventions that a policymaker might impose on the platform. We specifically consider restricting the platform's ability to use sophisticated algorithms that respond to posted prices, and its ability to condition advertising and prices on the consumer' full value profile.

\subsection{Competition Management}

A first regulatory question is whether fully automated systems should be kept in check. The \citet[\S 6.15]{cma20} expresses concerns that ``Although both Google's and
Facebook's core services can be accessed by consumers at no direct cost,
consumers therefore nevertheless suffer financially from the exercise of
market power.'' The alleged concern is that the platform's market power raises the cost of advertising, which is then passed on to consumers.

To address these concerns, we analyze a policy that limits the platform's active role in managing firm competition. In particular, we assume the platform's pricing and steering policies can condition on the consumer's full value profile, but not on the posted prices. 

\paragraph{Current Practice} Limiting auto-bidding algorithms to enable \emph{rule-based bidding} only is an example of such a policy intervention. According to Amazon, rule-based bidding is an existing automated bidding strategy that ``take[s] the guesswork out of adjusting bids,'' but lets advertisers introduce fixed rules for showing creatives to (and bidding on) specific consumer segments.\footnote{See a complete description of available bidding strategies on Amazon at \url{https://advertising.amazon.com/help/GCU2BUWJH2W3A8Z7}.} Likewise, Google's ``Demand Gen'' campaigns allow
advertisers to manually select specific channels for ad display,
\textquotedblleft offering more control over where and how ads appear.\textquotedblright \footnote{See \url{https://ads.google.com/home/campaigns/demand-gen}.} Letting advertisers retain partial control over the bidding rules necessarily slows down the algorithm's adjustment process. In our static model, we capture these algorithms by means of managed campaigns that do not react to evolving market conditions (as proxied by deviations in  posted prices).

\paragraph{Independent Managed Campaign}
We now restrict the platform's pricing and steering policy space by removing the platform's ability to condition on off-platform prices. The platform can now only propose a pricing policy $p:V^{J}\times \{0,1\}^{J} \rightarrow \mathbb{R}_{+}$  and a steering policy $s:V^{J}\times \{0,1\}^{J} \rightarrow J$ that depend  on the value $v$ and the participation decision of the firms.


In the previous section, we showed that the optimal managed campaign dampens competition between firms, resulting in higher posted prices off-platform than in data-augmented bidding. 
Theorem \ref{thm:indep_managed_campaign} shows that forcing  the platform to price 
\textit{independently} of the posted price decisions curtails the ability of the platform to soften
competition. We denote the posted price off the platform induced by the independent campaign by $p_I$.

\begin{theorem}[Independent Managed Campaigns]
\label{thm:indep_managed_campaign} Any independent managed campaign with efficient steering results in lower prices relative to the optimal campaign: $p_I < p_V$, higher social welfare, and higher consumer surplus.
\end{theorem}

The critical economic feature that an independent campaign introduces is the potential for consumers to be poached by other firms. In particular, since the on-platform prices cannot condition on the off-platform prices, a deviating price downwards by firm $i$ can induce some consumers whose favorite firm is $j \neq i$ to buy from $i$ instead. This downward pressure helps mitigate the platform's ability to soften competition.

To illustrate how an independent campaign  restores a downward competitive pressure on prices, we present two examples where the platform steers consumers efficiently and prices as in the optimal managed campaign (i.e. $p_i(v) = \min(v_i, p_V)$). We show that the posted price $p_I$ in the independent campaign can even fall below the monopoly price $p_M$. In such an independent managed campaign, the best-response problem of the firm simplifies to
\[
\max_p \left[ \frac{1-\lambda}{J}p(1-F(p)) + \lambda \int_{\uv}^{\bv} \min(v,p,\hat{p}) F^{J-1}(v - \min(v,p,\hat{p}) + p') dF(v) \right].
\]
The first-order condition for $p$ at or below $\hat{p}$ then implies that: 
\begin{equation}
 \frac{1-\lambda}{J}(1 - F(p) - pf(p)) + \lambda \int_p^{\bv} (F^{J-1}(v) - p dF^{J-1}(v)) dF(v) = 0.
\label{eq:indc}
\end{equation}
Note the presence of the term $p\,dF^{J-1}(v)$, which captures the poaching gain/loss from consumers who at the margin are nearly indifferent between firms. 


\noindent\textbf{Example $J=2: p_I > p_M.$ } Take a uniform distribution of
values ($F(x) = x$), and suppose there is an equal share of on-platform and
off-platform consumers ($\lambda = 1/2$), and consider two firms. Computing the best-response price using the first-order condition (\ref{eq:indc}), we obtain 
\begin{equation*}
p_I = 1 - p_I + 2 \int_{p_I}^1 (v - p_I) \ dv \approx 0.59 > 0.5 = p_M.
\end{equation*}
\textbf{Example: $J=3: p_I < p_M.$ } Consider almost the exact same environment
as the previous example (uniform distribution of values, equal share of
consumers on- and off-platform) but now we consider three firms. In this case, we obtain 
\begin{equation*}
p_I = 1 - p_I + 3 \int_{p_I}^1 (v^2 - p_I (2v)) \ dv \approx 0.43 < 0.5 = p_M.
\end{equation*}

By adding one firm to the previous example, the competitive effect becomes
stronger, and the posted price falls to a level below the monopoly price $p_M$. Note that the pricing in sophisticated managed campaign---which allows for perfect price discrimination up to the showrooming constraint---may not be the optimal independent pricing policy for the platform. In particular, as the example shows, this pricing policy induces stronger competition by firms and sometimes lower posted prices than even the monopoly price. Note that a pricing policy that offers a product for free to consumers whose values are all below $p_M$, and price $p_M$ for a consumer's favorite firm otherwise, can induce an equilibrium  posted price of $p_M$; however, such a pricing policy necessarily concedes rent to the consumer or reduces aggregate welfare. This illustrates the platform's trade-off under the independent pricing restriction: the more aggressive the price discrimination offered by the platform, the stronger the  incentives  for firms to undercut each other, and  the lower the posted prices. However, to raise the posted prices, the platform must concede utility to the consumers; an optimal independent pricing policy must therefore balance these two forces.

\subsection{Privacy and Data}

We now assess the impact of privacy regulation by considering  policies that limit the  firms' access to the consumers'
information. Specifically, we consider cohort-based privacy, which is a restriction in line with the recent Google Privacy Sandbox proposals to replace
third-party cookies. Under this policy, the platform  in our model informs the firms about the consumer's ranking of their products, without disclosing the consumer's exact value for any specific product.\footnote{See the complete Google proposal at \url{https://privacysandbox.com/}. In this Section, we focus on exogenous restrictions on information disclosure. Voluntary information disclosure by the consumer is another important, though different dimension. See \cite{allv19} for a treatment of this  question.}

Formally, the platform's steering policy selects a firm to advertise to each \emph{cohort }of consumers, and each consumer within a cohort has the same preference \emph{ranking} over the $J$ firms. In what follows, we maintain the efficient steering policy (i.e., the platform shows the consumer her favorite product), which yields exactly $J$ distinct consumer cohorts. We then restrict the platform's pricing policy space to $$p:J\times\{0,1\}^{J}\times \mathbb{R}_{+}^{J}\rightarrow \mathbb{R}_{+}.$$
Thus, the platform cannot price based on the consumer's individual value vector $v$, but it can condition the advertised price on the consumer's cohort, the firms' participation  decisions, and the posted prices. This is in contrast to the independent managed campaign, which conditions advertised prices on the
consumer's value but not on posted prices. We denote the resulting equilibrium  off-platform price with \emph{privacy protection} by $p_{P}$.


\begin{proposition}[Cohort Privacy]\label{prop:partial_privacy}\strut
In the platform-optimal managed campaign with cohort privacy, the posted price is $p_{P}$ with: 
\begin{equation}
p_{P}=\frac{1-(1-\lambda )F(p_{P})-\lambda F^{J}(p_{P})}{(1-\lambda
)f(p_{P})+\lambda JF^{J-1}(p_{P})f(p_{P})}.
\label{eqn:partial_privacy_price}
\end{equation}%
This managed campaign can be implemented by the platform pricing each segment at the lowest off-platform price: $p(i,\cdot,\bar{p}) = \min_{i} \bar{p}_i$. On path, the on-platform price is also $p_{P}$, and the equilibrium posted price $p_{P}$ satisfies $p_{M}\le p_{P}\le p_{V}$.
\end{proposition}

Intuitively, firms face a distributional mixture of consumers; a measure $(1-\lambda )/J$ of consumers are loyal with values distributed according to $F$; and a measure of $\lambda /J$ consumers are on-platform shoppers who are matched to the efficient firm, i.e., their values are distributed as $F^{J}$. Hence, the firm would like to be able to
set higher prices to take advantage of a more favorable distribution of
consumer values, but showrooming limits its ability to do so. 

Proposition \ref{prop:partial_privacy} also shows that the off-platform posted
prices are lower than under the optimal managed campaign, which implies greater consumer surplus and total welfare off the platform. However, the  inability to
price discriminate on-platform means the privacy restriction reduces total welfare on the platform too, because low-value consumers are priced out. Hence, as the platform size $\lambda$ grows, total welfare can be worse under privacy restrictions than under  data-augmented bidding or managed campaigns. However, consumer surplus grows because low-value consumers' surplus is nil  in
both settings without privacy protection; that is, the loss in welfare comes entirely from reduced producer surplus.

\section{Extensions and Robustness}

We discuss two extensions of the basic model that speak to the robustness of
the analysis. The first variation concerns the nature of the off-platform
market, the second the nature of the platform, in particular the revenue
model of the platform.

\subsection{Off-Platform Competition}\label{offplat}

Previously, we modeled each firm as operating as a monopolist of a market segment off the
platform. We now show that the analysis and consequent results extend to a
more competitive structure in the off-platform markets. Suppose then that
the off-platform market is divided into $K$ markets, and each firm operates
in one market off-platform, so each off-platform market has $N=J/K$ firms
(and assume $N$ is an integer).

Now, the posted price impacts the off-platform market slightly differently.
In particular, by setting a posted price $p$ when the competitors in the
off-platform market set price $p^{\prime }$, the firm wins an off-platform
consumer if and only if $v-p\geq v^{\prime }-p^{\prime }$, where $v^{\prime }$ is the value of the best competitor. Hence, the firm's profit off the platform is given by 
\begin{equation}
\int_{p}^{\bv} pF^{N-1}(v-p+p^{\prime })dF(v).\label{persal}
\end{equation}%
In the absence of the platform, the symmetric equilibria of the game with payoffs \eqref{persal} yield the 
oligopoly prices, which we denote by $p_{O}^{K}$. These are the  prices the firms would charge if there were $K$ segmented markets. The basic model considered $K=J$ segmented markets, so that each firm was acting
as a monopolist in its market; thus for $K=J,\ p_{O}^{J}=p_{M}$. We now
investigate how a more competitive off-platform market affects the behavior
in the bidding and managed campaign mechanism. 

To this end, note that the on-platform profit terms in our previous analysis are not affected by changes in the off-platform market as suggested above.  
We now consider the symmetric equilibrium off-platform prices with competition in $K$ market
segments, both under  data-augmented bidding $(p_{B}^{K})$ and under the fully optimal managed campaign $(p_{V}^{K})$. We obtain the following comparison.

\begin{proposition}[Off-Platform Competition]
\label{prop:off_competition} In the highest-price symmetric equilibria with off-platform competition in $K$ markets, the off-platform posted prices satisfy
\begin{equation*}
p_{O}^{K}\leq p_{B}^{K}\leq p_{V}^{K}.
\end{equation*}
\end{proposition}

Thus, the equilibrium ordering of the off-platform prices, and the
corresponding welfare results are invariant to the structure of competition in
the off-platform markets. In particular, the above ordering is same as
in Theorem \ref{thm:managed_prices_highest_variant} where we observed: $%
p_{M}\leq p_{B}\leq p_{V}$. 

\subsection{Platform Revenue Models}

In our modeling of the managed campaign, the platform requests an up-front participation fee, the advertising budget. This aligns with the
practice of managed campaigns on pure advertising platforms such as google,
Facebook or Tiktok. These platforms match advertisers and consumers, but do
not charge any transaction fees. By contrast, shopping or retail platform
such as amazon or Instacart typically receive revenue from a mixture of
advertising and sales commissions. One might thus wonder whether there are
multiple, and payoff equivalent mechanisms that could all attain the same
total revenue. Here, we shall focus one such alternative in which the
platform is charging a constant transaction fee $t_{j}$ to each firm and
does not impose a fixed payment $T_{j}$. We then show that for modest
transaction fees, the firms' incentives for setting prices off the platform
are exactly as in the managed campaign we discussed in Theorem \ref%
{thm:managed_eq_best_value}

\begin{proposition}[Transaction Fees]

\label{prop:transaction_fees} Suppose the platform proposes a transaction
fee $t_{j}$ for each sale of firm $j$ on the platform. There exists a $\bar{t%
}>0$ such that if the fee satisfies $t_{j}\leq \bar{t}$, then the off platform
prices $\bar{p}$ are the same as in the best-value managed campaign of Theorem \ref{thm:managed_eq_best_value}.

\end{proposition}

In the managed campaign the platform receives all the revenue from the sponsored product placement through the fixed fees. By contrast, the platform does not receive a revenue share or a commission on the sales realized on the platform. The revenue from the sales accrues directly to the advertiser. In the current revenue model, we replaced the ex ante advertising
budget with a constant sales (or referral) fee. The key insight is that the referral fee does not influence the marginal incentives of the firm's pricing decision; hence, provided the referral fee is not too large and the firm is still willing to participate, the pricing decision off platform remains unchanged. Thus, there are revenue models that are revenue equivalent to the managed campaign mechanism.

\section{Conclusion}

Many digital platforms such as Google, Meta, Amazon, and TikTok generate
revenue through advertising by placing ads or sponsored slots on their own
and partner websites. These platforms use a combination of manual and automated bidding mechanisms to select  valuable advertisements to display to each user and to set prices for these ads. The
platform's knowledge about the match value between consumers and products is
critical to the success of both mechanisms. This knowledge helps generate
the most competitive bids from advertisers and supports 
clicks and other forms of user engagement with the platform. 

We have proposed an
integrated model that considers how auction mechanisms and data availability jointly determine match formation and surplus  extraction both on and off large digital platforms. The auction mechanisms employed by  the platform have substantial implications for
product prices. On the platform, the data made available to the advertisers
allows for efficient matching, yet most of the surplus accrues to the
platform. Off the platform, advertisers raise prices  to gain a competitive edge on the platform. 

The cross-channel distortions become more pronounced the more tools the platform has at its disposal, relative to  the traditional (generalized) second price auctions. Indeed, we have shown that the higher costs of online advertising under a more extractive mechanism are passed on to consumers by means of higher product prices \emph{off} rather than \emph{on} the platform. These results  suggest the need for further
analysis of how algorithmic bidding impacts competition and welfare in all markets, particularly off the large digital platforms. 

\pagebreak

\appendix

\section{Appendix}


The proof of Proposition \ref{prop:bidding_efficiency} is given in the text.

\paragraph{Proof of Theorem \ref{thm:bidding_eq}}
(1.) This part follows directly from Proposition \ref{prop:bidding_efficiency}. 

(2.) Because we are looking for symmetric equilibria,  suppose all the other firms post price $p'$ and consider the best response problem of a single firm:
\begin{equation}\label{eqn:bidding_profit_expr}
\max_p \left\{ \frac{1-\lambda}{J}\, p(1 - F(p)) + \lambda\Omega(p; p') \right\},
\end{equation}
where
\[ \Omega(p ; p') = \int_{\underline{v}}^{\bar{v}} \int_{\underline{v}}^v ( \min(v - \max(v' - p', 0) , p) - (\min(v' - \max(v-p,0), p'))_+) dF^{J-1}(v') dF(v).  \]
This term denotes the expected profit from on-platform consumers that a firm would expect to make by setting a posted price at $p$ when all other firms set a posted price $p'$. The term integrates over $v' = \max_{j \neq i} v_j$, which is the highest value the consumer has for any other firm besides $i$. Since the firm must concede utility $\max(v' - p', 0)$ to the threat of the on-platform consumer going to the competitor, the firm setting price $p$ will bid $\min(v - \max(v' - p', 0), p)$. The highest competitor bids $(\min(v' - \max(v-p,0), p'))_+$.

With some casework, we can show that the expected on-platform profit satisfies \begin{equation}\label{eqn:omega_definition}
    \Omega(p;p') =  \int_{\underline{v}}^{\bar{v}} \int_{\underline{v}}^v \min(v - v', p) \ dF^{J-1}(v') \ dF(v), \text{ for all $p'$}.
\end{equation}
Expression \eqref{eqn:omega_definition} shows that  each firm's profit in the second-price auction cannot exceed the difference between its own value and the next highest value. Furthermore, this difference is capped by the firm's own posted price $p$ due to the showrooming constraint. Additionally, this expression  is independent of $p'$, so we suppress the dependence on $p'$ in the notation and write $\Omega(p)$ instead.

To characterize  the symmetric equilibria, we compute the derivative of $\Omega$ with respect to $p$. Straightforward algebra yields the following expression:
\begin{equation}\label{eqn:onplatform_profit_deriv}
\Omega'(p)=\int_p^{\bar{v}} F^{J-1}(v - p)\ dF(v).
\end{equation}
Finally, we can write out the first-order condition for profit maximization using  \eqref{eqn:onplatform_profit_deriv}:
\begin{equation*}
     \frac{1 - \lambda}{J}\left( 1 - F(p) - pf(p) \right) + \lambda \left( \int_p^{\bar{v}} F^{J-1}(v - p) dF(v) \right) = 0.
\end{equation*}

 (3.)-(4.) These results follow from setting  $\bar{p}_j=p_B$ in the buyer's outside option \eqref{eq:outside} and recalling from the proof of Proposition \ref{prop:bidding_efficiency} that firm $j$ bids $b_j(v,\bar{p})=(v_j-\underline{u}(v,\bar{p}))_+$. \hfill \qedsymbol

\paragraph{Proof of Proposition \ref{prop:bidding_price_larger}} (1.) The equilibrium price maximizes the profit function
\begin{equation*}
    \frac{1-\lambda}{J}p(1 - F(p)) + \lambda \Omega(p), 
\end{equation*}
where $\Omega(p)$ is given in \eqref{eqn:omega_definition}. Equivalently, $p_B$ maximizes the rescaled profit function
\begin{equation*}
    \frac{1}{J}p(1 - F(p)) + \frac{\lambda}{1-\lambda} \int_{\underline{v}}^{\bar{v}} \int_{\underline{v}}^v \min(v - v', p) \ dF^{J-1}(v') \ dF(v). 
\end{equation*}
Because the second term is strictly increasing in $p$ and $\lambda$, and it is multiplicatively separable, this function is  supermodular in $(p,\lambda)$; hence, by Topkis's theorem \citep{topk78}, the
profit-maximizing posted price is nondecreasing in $\lambda$. 

(2.)  The expected consumer surplus of an off-platform consumer and the expected welfare per consumer off-platform are given by 
\[ CS_{\off}(p) = \int_p^{\bar{v}} (v-p) \ d F(v)\quad\text{and}\quad W_\off(p) = \int_p^{\bar{v}} v \ dF(v) , \]
respectively. Both quantities are strictly decreasing in $p$, and hence also in  $\lambda$.\hfill\qed

\paragraph{Proof of Theorem \ref{thm:managed_eq_best_value}}
(1.) This follows from the definition of the best-value pricing  and efficient steering policies when all firms post an identical price $\bar{p}_j\equiv p_V$.

(2.) To characterize the symmetric equilibrium prices, suppose first the firm sets a price $p < p'$. In this case, under the best-value pricing policy, the firm will not poach any on-platform consumer for which it is not the high-value firm. Thus, the firm collects $\min(v,p)$ on all consumers for which it is the high-value firm. The firm's profit is given by
\begin{equation}
\Pi(p,p')= \frac{1-\lambda}{J}p(1-F(p)) + \lambda \left( \int_{\underline{v}}^p v F^{J-1}(v) \ dF(v) + \int_p^{\bar{v}} p F^{J-1}(v) \ dF(v) \right).\label{pilow}
 \end{equation}
The derivative with respect to $p$ is given by
\begin{align}
    &\frac{1-\lambda}{J}(1-F(p) - pf(p)) + \lambda \left(p F^{J-1}(p)f(p) - pF^{J-1}(p)f(p) + \int_p^{\bv} F^{J-1}(v) \ dF(v) \right)\notag \\
    =&\frac{1-\lambda}{J}(1-F(p) - pf(p)) + \lambda \left( \int_p^{\bv} F^{J-1}(v) \ dF(v) \right).\label{eq:foc_managed_variant_left}
\end{align}
Now, suppose the firm posts a  price $p > p'$. The firm's profit function is given by
\begin{align}
 \Pi(p,p')&=\frac{1-\lambda}{J}p(1-F(p)) + \lambda  \begin{pmatrix} \int_{\uv}^{p'} v F^{J-1}(v) \ dF(v) + \int_{p'}^p \int_{\uv}^{p'} v \ dF^{J-1}(v') \ dF(v)  \\ 
 +\int_{p'}^p \int_{p'}^v (v - (v' - p')) \ dF^{J-1}(v') \ dF(v)\\
 + \int^{\bv}_{p} \int_{\uv}^{p' + (v-p)} p \ dF^{J-1}(v') \ dF(v)   \\
  +\int_{p}^{\bv} \int_{p' + (v-p)}^v (v - (v'-p')) \ dF^{J-1}(v') \ dF(v)
 \end{pmatrix}.\label{pihigh}
\end{align}
With some algebra, one can show that the derivative of this expression with respect to $p$ is
\begin{eqnarray}
 \frac{1-\lambda}{J}(1-F(p) - pf(p)) + \lambda \left(\int_{p}^{\bv}  F^{J-1}(p' + v - p) \ dF(v)
 \right). \label{eq:foc_managed_variant_right}
\end{eqnarray}
Comparing (\ref{eq:foc_managed_variant_left}) and (\ref{eq:foc_managed_variant_right}), the derivative matches from the left and right at $p = p'$, and so the best-response function is continuously differentiable at $p'$ with derivative
\[\frac{1 - \lambda }{J}(1 - F(p) - pf(p)) +  \lambda \int_p^{\bv} F^{J-1}(v) \ dF(v). \]
This expression is strictly positive at $p=\uv$ and strictly negative at $p=\bv$. Therefore, a necessary condition for a symmetric equilibrium is given by the first-order condition \eqref{eq:managed_price_best_value} that sets this derivative to zero.  Furthermore, if a single price satisfies \eqref{eq:managed_price_best_value}, then the symmetric equilibrium is unique. \hfill \qedsymbol

\paragraph{Proof of Theorem \ref{thm:best_value_max}}
The vertically integrated platform, which controls all the firms' prices, sets identical prices $\bar{p}_j=p$ to maximize
    \[ \Pi_C(p)\triangleq \left\{ (1-\lambda)p(1 - F(p)) + \lambda \int_{\uv}^{\bv} \min(v, p) \ dF^{J}(v)  \right\}. \]
    Now compare this problem to a firm's best reply to a common price $p'$ posted by its competitors: the firm's profit $\Pi(p,p')$ in \eqref{pilow} coincides with $\Pi_C(p)/J$ on $p\in[\uv,p']$; and the firm's profit $\Pi(p,p')$ in \eqref{pihigh} satisfies $\Pi(p,p')<\Pi_C(p)/J$ on  $[p',\bv]$. Now let $p\in\arg\max_p \Pi_C(p)$ denote a solution to the vertically integrated platform's problem. By construction, we have that $p\in\arg\max_p \Pi(p,p^*)$. Therefore, any $p^*$ is a symmetric equilibrium of the best-value pricing managed campaign, and the largest $p^*$ is also the highest symmetric equilibrium price of the managed campaign. Finally, any $p^*$ attains the integrated producer surplus level.
    
    
    
To show the optimality of best-value pricing and efficient steering, consider the platform revenue, which equals the joint surplus generated by this mechanism, net of the firms' outside option value defined in \eqref{eqn:outside}. This level of the outside option is a lower bound on the profit of a firm that  refuses to participate in any mechanism. Therefore,  the
managed campaign we are considering maximizes the joint surplus of the platform and firms, and it concedes the smallest possible surplus to the firms. It follows that the best-value pricing campaign maximizes the platform's revenue.\hfill\qed

\paragraph{Proof of Theorem \ref{thm:managed_prices_highest_variant}}
We can nest the optimal pricing problem across the three problems (monopoly, auction, campaign) with a parameter $\gamma$. Consider the choice of posted price in each of the three models. Define the auxiliary profit  function
\[ \Pi(p, \gamma)\triangleq\frac{1-\lambda}{J}\left( 1 - F(p) \right) p  + \lambda \left( \int_{\uv}^{\bv} \int_{\uv}^v \min(v - \gamma v',p) dF^{J-1}(v') dF(v)\right). \]
In the data-augmented auctions, each firm's profit function is given by $\Pi(p, 1)$.
The profit function of the vertically integrated firm (which yields the equilibrium price in the best-value pricing managed campaign by Theorem \ref{thm:best_value_max}) is given by $\Pi(p, 0)$. It is straightforward to verify that $\Pi$ is submodular in $(p, \gamma)$: 
\[ \frac{\partial^2 \Pi(p,\gamma)}{\partial \gamma \partial p} = -\int_p^{\bv} (v-p) (J-1)F^{J-2}(v-p)f(v-p)f(v) \ dv < 0.\]
Thus, by Topkis's theorem, the largest maximizer of $\max_p \Pi(p,\gamma)$ is nonincreasing in $\gamma$. Since $p_V = \max\{\arg\max_p \Pi(p,0)\}$ and $p_B = \max\{\arg\max_p \Pi(p,1)\}$, it follows that $p_V \ge p_B$. Note that this also implies $p_V \ge p_B \ge p_M$ by Proposition \ref{prop:bidding_price_larger}. Finally, note that  the matching of consumers to firms is identical across the two mechanism. Thus, the comparison of total surplus and consumer surplus is entirely driven by the posted price. Because both surplus levels are decreasing in $p$, the welfare comparative statics follow.\hfill\qed

\paragraph{Proof of Proposition \ref{prop:passthrough}}
Note that the joint profit outcome of the platform and firms in both bidding and the sophisticated managed campaign takes the form:
\[ \Pi_J(p) = (1-\lambda)p(1 - F(p)) + \lambda \int_{\uv}^{\bv} \min(v,p) dF^{J}(v),  \]
and the bidding profit outcome is $\Pi_J(p_B)$ while the sophisticated managed campaign outcome is $\Pi_J(p_V)$. Fixing the bargaining power of the platform, the total transfer charged to all firms is $\Pi_J(p)/J - \Pi_O$, and so we can compute $\eta$ as 
\[ \frac{p_V - p_B}{(1-\lambda)/J\left[p_V(1 - F(p_V)) - p_B(1 - F(p_B)) \right] + \lambda/J \int_{p_B}^{\bv} (\min(v,p_V) - p_B) F^{J-1}(v)dF(v)  }, \]
\[ = \frac{J}{(1-\lambda)\left[1 - \frac{p_VF(p_V)- p_BF(p_B)}{p_V - p_B} \right] + \lambda \int_{p_B}^{p_V}\frac{v - p_B}{p_V - p_B}dF^J(v)  + \frac{\lambda}{J} \int_{p_V}^{\bv} dF^J(v)  }. \]
We claim that the denominator of the above expression is less than 1. To see this, note the bracketed term is less than 1 because $p_V > p_B$, so the first term is at most $1-\lambda$. The second term is integrated for $v \in [p_B, p_V]$, so $0 \le v - p_B \le p_V - p_B$. Hence the sum of the last two terms is bounded above by the integral $\lambda \int_{p_B}^{\bv} dF^J(V) = \lambda (1 - F^J(p_B)) < \lambda$. Thus, the denominator is at most $1$, with strict inequality for $\lambda > 0$, and so $\eta > J$. \hfill\qed

\paragraph{Proof of Theorem \ref{thm:indep_managed_campaign}}
Suppose, for sake of contradiction, that in the optimal independent managed campaign, $p_I > p_V$, and the platform chose some pricing policy $p^*(v)$. Consider the best-response problem of a firm in the independent managed campaign. The best-response profit function for all $p < p_I$ is 
{\small \[ \Pi_I(p)= \frac{1-\lambda}{J}p(1 - F(p)) + \lambda \begin{pmatrix}  \int_{\uv}^{\bv} \min(p^*(v),p)  F^{J-1}(v) \ dF(v)
\\ +\int_{p} p \left[ \int_v^{\bv} \mathbbm{1}[v_j - p^*(v) < v - p]dF^{J-1}(v_j) \right] \ dF(v) \end{pmatrix}. \]}
We can split this into two components. Denote the first component 
\[ \Pi_A(p) \triangleq  \frac{1-\lambda}{J}p(1 - F(p)) + \lambda \int_{\uv}^{\bv} \min(p^*(v),p)  F^{J-1}(v) \ dF(v), \]
which is the profit from off-platform sales and on-platform sales in the segment that of on-platform consumers that prefer firm $i$, and the second component 
\[ \Pi_B(p) \triangleq \lambda \int_{p}^{\bv} p \left[ \int_v^{\bv} \mathbbm{1}[v_j - p^*(v) < v - p]dF^{J-1}(v_j) \right] \ dF(v),  \]
which is the segment of consumers who prefer some other $j \neq i$ but are poached by $i$. Note that by construction, $\Pi_I(p) = \Pi_A(p) + \Pi_B(p)$. By the contradiction supposition, $p_I > p_V$ was the outcome of the optimal independent managed campaign. Note that $\Pi_B(p_I) = 0$, since no consumers can be poached when $i$ sets the same posted price as all other firms. Since $\Pi_B \ge 0$ for all $p < p_I$, it follows that $\Pi_B(p_V) - \Pi_B(p_I) \ge 0$. 

We now claim that $\Pi_A(p_V) - \Pi_A(p_I) > 0$. Consider $\Pi_A'$. Let $\phi^*$ denote the preimage of $p^*$: $\phi^*(p) = \{ v \in V | p^*(v) \ge p, \ i = \arg\max v_i\}$. Then 
\[ \Pi_A'(p) = \frac{1-\lambda}{J}(1 - F(p) - pf(p)) + \lambda \mu(\phi^*(p)), \]
where $\mu$ is the probability measure over the type space. 
Note that since it is without loss for the platform pricing policy to never set a price larger than the consumer's value, $\phi^*(p) \subseteq  \{ v \in V | v \ge p,  i = \arg\max v_i\} =:  \bar{v}(p)$. 
Thus, $\mu(\phi^*(p)) \le \mu(\bar{v}(p)) = \int_p^{\bv} F^{J-1}(v) d F(v) $. Substituting this in, we get
\begin{align*} \Pi_A'(p) &= \frac{1-\lambda}{J}(1 - F(p) - pf(p)) + \lambda \mu(\phi^*(p)) \\
&\le \frac{1-\lambda}{J}(1 - F(p) - pf(p)) + \lambda \int_p^{\bv} F^{J-1}(v) d F(v)  = \Pi_V'(p), \end{align*}
where $\Pi_V$ was the joint vertical integration profit from the sophisticated managed campaign. Therefore, we have $\Pi_A'(p) \le \Pi_V'(p)$, so 
\[ \Pi_A(p_V) - \Pi_A(p_I) = \int_{p_V}^{p_I} - \Pi_A'(p) dp \ge \int_{p_V}^{p_I} - \Pi_V'(p) dp = \Pi_V(p_V) - \Pi_V(p_I) > 0, \]
where the last inequality follows because $p_V$ by definition is the largest maximizer of $\Pi_V$. But this implies that $\Pi_A(p_V) - \Pi_A(p_I) > 0$ and $\Pi_B(p_V) - \Pi_B(p_I) \ge 0$, and hence $\Pi_I(p_V) - \Pi_I(p_I) > 0$, which contradicts our supposition that the individual best response was to set posted price $p_I$. Thus, it cannot be that $p_I > p_V$.

To show that we cannot have equality, suppose for sake of contradiction that $p_I = p_V$, and consider $\Pi_I'(p_V)$. By the same argument above, we have that $\Pi_A'(p_V) \le \Pi_V'(p_V) $. If $\Pi_A'(p_V) > \Pi_V'(p_V)$ strictly, then since $\Pi_B(p_V) = 0$ and $\Pi_B(p_V) \ge 0$, there exists an $\epsilon$ such that $\Pi_I'(p_V - \epsilon) > \Pi_I'(p_V)$, a contradiction. If $\Pi_A'(p_V) = \Pi_V'(p_V)$, then it must be that $\phi^*(p) = \bar{v}(p)$, so every consumer with value at least $p$ sees a price at least $p$. But this implies that poaching can happen; more precisely, the left-derivative $(\Pi_B)'_{-}(p_v) = - \int_p^{\bv} p dF^{J-1}(v) dF(v)$. Since the left derivative of $\Pi_B$ is strictly negative, and $\Pi_A'(p_V) = \Pi_V'(p_V)$, for small enough $\epsilon$ then we must have $\Pi_I(p_V - \epsilon) > \Pi_I(p_V)$, a contradiction again. Hence we cannot have $p_I = p_V$. Thus $p_I < p_V$. Since the welfare and off-platform surplus are decreasing in posted price for $p \ge p_M$, the welfare comparative statics follow. \hfill \qedsymbol

\paragraph{Proof of Proposition \ref{prop:partial_privacy}}
    First, consider the problem of a vertically integrated platform facing the cohort-privacy constraint:
    \[  \Pi_P(p) := \max_{p' \le p} \frac{1-\lambda}{J}p(1- F(p)) + \frac{\lambda }{J} p' (1 - F^{J}(p')). \]
    Let $p_P$ denote the largest maximizer of $\Pi_P$. We can write the Lagrangian:
    \[  L(p,\mu) := \frac{1-\lambda}{J}p(1- F(p)) + \frac{\lambda }{J} p' (1 - F^{J}(p')) + \mu (p' - p), \]
    where $\mu$ is the multiplier associated with the showrooming constraint. Because $F^J$ satisfies the monotone-likelihood ratio with respect to $F$, the unconstrained maximum must have $p' > p$; hence, the showrooming constraint binds,  $p' = p$, and the optimal $p_P$ satisfies: 
    \begin{equation*}
     \frac{1-\lambda}{J}(1 - F(p) - pf(p)) + \frac{\lambda}{J}(1 - F^J(p) - JpF^{J-1}(p)f(p)) = 0.      
    \end{equation*}
    Note that $\Pi_P(p_P)$ can be rewritten as
    \[ \Pi_P(p_P) = \frac{1}{J}p_P(1 - ((1-\lambda)F(p_P) + \lambda F^{J}(p_P) ). \]
    The necessary first-order condition for optimality thus requires $p_P$ to be equal to the inverse hazard rate of the distribution $(1-\lambda)F + \lambda F^J$; since $(1-\lambda)F + \lambda F^J$ satisfies the monotone likelihood ratio property with respect to $F$, it follows that $p_P \ge p_M$. Similarly, since $F^J$ satisfies the monotone likelihood ratio property with respect to $(1-\lambda)F + \lambda F^J$, it follows that the maximizer $p_J := \arg\max_p p (1 - F^{J}(p))$ is larger than $p_P$.  
    
    We now show that the platform can implement an equilibrium where all firms post $p_P$. Consider the platform pricing policy, which sets the price for cohort $i$ as the minimum posted price of firms $j \neq i$. The best-response profit of firm $i$ when all other firms set price $p_P$ is
    \[ \begin{cases} 
      \frac{1-\lambda}{J}p(1- F(p)) + \frac{\lambda }{J} p (1 - F^{J}(p)) & p < p_P, \\
      \frac{1-\lambda}{J}p(1- F(p)) + \frac{\lambda }{J} p_P (1 - F^{J}(p_P)) & p \ge p_P.
    \end{cases} \]
    Suppose, towards a contradiction, that $p_P$ was not a maximizer, and some other $p^*$ was the maximizer. If $p^* < p$, that implies that $\Pi_P(p^*) > \Pi_P(p_P)$, a contradiction of the definition of $p_P$. If $p^* > p$, then there exists $p' = p_P$, $p = p^*$ such that 
    \[ \frac{1-\lambda}{J}p(1- F(p)) + \frac{\lambda }{J} p' (1 - F^{J}(p')) > \Pi_P(p_P), \]
    again a contradiction of the optimality of $p_P$. Hence, the best-response of each firm is also to set price $p_P$. Since this attains the vertical integration profit, it is optimal for the platform to set such a pricing policy.

    It remains to show that $p_V \ge p_P$. Note that $p_P$ maximizes 
    \[  \Pi_P(p) = \frac{1-\lambda}{J}p(1- F(p)) + \frac{\lambda }{J} p (1 - F^{J}(p)),\vspace{-.1in} \]
    and $p_V$ maximizes\vspace{-.1in}
    \[ \Pi_V(p) =\frac{1-\lambda}{J}p(1- F(p)) + \frac{\lambda }{J} \left( \int^p_{\uv} v dF^J(v) + \int_p^{\bv} p dF^{J}(v) \right).  \]
    To complete the argument, define the auxiliary profit function again:
    \[ \Pi(p, \gamma) = \frac{1-\lambda}{J}p(1- F(p)) + \frac{\lambda }{J} \left( \gamma \int^p_{\uv}  v dF^J(v) + \int_p^{\bv}  p dF^{J}(v) \right), \]
    and note that $\Pi(\cdot,0) = \Pi_P$ and $\Pi(\cdot,1) = \Pi_V$. Finally, note that 
    \[ \frac{\partial^2 \Pi(p,\gamma)}{\partial \gamma \partial p} =  p\, dF^{J}(p) > 0.\]
    By Topkis's theorem, the maximizer of $\Pi(\cdot, \gamma)$ is nondecreasing in $\gamma$; hence $p_P \le p_V$.\hfill\qed

\paragraph{Proof of Proposition \ref{prop:off_competition}}
We consider the benchmark without a platform (denoted by $O$ for oligopoly), the data-augmented bidding $B$, and the best-value pricing campaign $V$. 
To compare  these three cases, consider an auxiliary game with payoffs
\begin{align}\label{eqn:aux}
    \Pi_{\off}(p,\bar{p};\lambda, \gamma) =&  \frac{1}{M} p \int_{p}^{\bv} F^{N-1}(v - (p - \bar{p})) f(v) \ dv \notag\\
    & +\frac{\lambda}{1-\lambda} \int_{\uv}^{\bv} \int_{\uv}^v \min(v- \gamma v',p) dF^{J-1}(v') dF(v).
\end{align}
The game with $\lambda=0$ describes the case without a platform; the game with $\gamma=1$ describes the data-augmented auctions; and a similar argument as in the proof of Theorem \ref{thm:best_value_max} establishes that every symmetric equilibrium in the game with $\gamma=0$ is also an equilibrium under best-value pricing. We can then define the following prices:\vspace{-.1in}

\[ p^K_O \in \arg\max_p \Pi_{\off}(p,p^K_O;0,1),  \]
\[ p^K_B \in \arg\max_p \Pi_{\off}(p,p^K_B;\lambda,1), \]
\[ p^K_V \in \arg\max_p \Pi_{\off}(p,p^K_V;\lambda,0). \]

To compare $p^K_O$ and $p^K_B$, consider the  best reply functions $p^*(\bar{p})$ in the game with $\lambda=0$ and in the game with $\lambda>0$ and $\gamma=1$.  The payoff function $\Pi_\off$ in \eqref{eqn:aux} has increasing differences in $(p,\lambda)$ for any $\bar{p}$ when $\gamma=1$. Indeed, we have
\[ \frac{\partial^2\Pi_\off(p,\bar{p};\lambda,1)}{\partial p \partial \lambda } = \frac{1}{(1-\lambda)^2} \int_{p}^{\bv} F^{J-1}(v - p) dF(v) > 0. \] 
By Topkis's theorem, a higher $\lambda$ increases the best reply $p^*$ to any price $\bar{p}$. Furthermore, the best response function  satisfies $p^*(\bv)<\bv$ in all three cases. Therefore,  in the highest-price symmetric equilibrium, the best-reply function crosses the line $p^*=\bar{p}$ from above, and an upward shift in the best replies raises the highest symmetric equilibrium price, i.e., $p^K_B>p^K_O$.

Similarly, the payoff function \eqref{eqn:aux} has decreasing differences in $(p,\gamma)$ for all $\bar{p}$ and  $\lambda>0$:
\[ \frac{\partial^2 \Pi_\off(p,\bar{p};\lambda,\gamma)}{\partial p \partial \gamma } = -\frac{\lambda}{1-\lambda}\int_p^{\bv} (v-p) (J-1)F^{J-2}(v-p)f(v-p)f(v) \ dv < 0. \]
By Topkis's theorem, the best replies in the game with $\gamma=0$ (i.e., the managed campaign) are pointwise larger than in the game with $\gamma=1$ (i.e., the data-augmented auctions). Furthermore, the best replies when $\gamma=1$  satisfy $p(\bv)<\bv$, and hence the largest symmetric equilibrium price increases as best replies shift up, i.e.,  $p^K_V>p^K_B$. Finally, because every equilibrium in the auxiliary game is also an equilibrium under the best-value pricing campaign,  the largest symmetric equilibrium in the original game is strictly larger than $p^K_B$.\hfill\qed

\paragraph{Proof of Proposition \ref{prop:transaction_fees}}
Under best-value pricing,  each firm $j$ sells to the $1/J$ fraction of consumers who like its product best, since by construction  
$v_j - p(v,a,\bar{p}) \ge v_k - \bar{p}_k$. Thus, the pricing decisions of the firms are the same as in Theorem \ref{thm:managed_eq_best_value}. The platform could then charge transaction fees instead of an upfront budget with the same outcome, provided the transaction fee satisfies the participation constraint of the firms: equivalently $\lambda t_j/J \le T_j$, where $T_j$ is what the platform charged in the original managed campaign model.\hfill\qed

\newpage

\bibliographystyle{econometrica}
\bibliography{general,ale2,nick}

\begin{thebibliography}{47}
\newcommand{\enquote}[1]{``#1''}
\expandafter\ifx\csname natexlab\endcsname\relax\def\natexlab#1{#1}\fi

\bibitem[\protect\citeauthoryear{Admati and Pfleiderer}{Admati and
  Pfleiderer}{1990}]{adpf90}
\textsc{Admati, A. and P.~Pfleiderer} (1990): \enquote{Direct and Indirect Sale
  of Information,} \emph{Econometrica}, 58, 901--928.

\bibitem[\protect\citeauthoryear{Aggarwal, Badanidiyuru, and Mehta}{Aggarwal
  et~al.}{2019}]{agga19}
\textsc{Aggarwal, Gagan, Ashwinkumar Badanidiyuru, and Aranyak Mehta} (2019):
  \enquote{Autobidding with Constraints,} in \emph{International Conference on
  Web and Internet Economics}, Springer, 17--30.

\bibitem[\protect\citeauthoryear{Ali, Sapiezynski, Bogen, Korolova, Mislove,
  and Rieke}{Ali et~al.}{2019}]{asbkmr2019}
\textsc{Ali, Muhammad, Piotr Sapiezynski, Miranda Bogen, Aleksandra Korolova,
  Alan Mislove, and Aaron Rieke} (2019): \enquote{Discrimination through
  optimization: How Facebook's Ad delivery can lead to biased outcomes,}
  \emph{Proceedings of the ACM on human-computer interaction}, 3, 1--30.

\bibitem[\protect\citeauthoryear{Ali, Lewis, and Vasserman}{Ali
  et~al.}{2023}]{allv19}
\textsc{Ali, S~Nageeb, Gregory Lewis, and Shoshana Vasserman} (2023):
  \enquote{Voluntary Disclosure and Personalized Pricing,} \emph{Review of
  Economic Studies}, 90, 538--571.

\bibitem[\protect\citeauthoryear{Balseiro, Deng, Mao, Mirrokni, and
  Zuo}{Balseiro et~al.}{2021}]{bals21}
\textsc{Balseiro, Santiago~R, Yuan Deng, Jieming Mao, Vahab~S Mirrokni, and
  Song Zuo} (2021): \enquote{The Landscape of Auto-Bidding Auctions: Value
  Versus Utility Maximization,} in \emph{Proceedings of the 22nd ACM Conference
  on Economics and Computation}, 132--133.

\bibitem[\protect\citeauthoryear{Balseiro and Gur}{Balseiro and
  Gur}{2019}]{bg2019}
\textsc{Balseiro, Santiago~R and Yonatan Gur} (2019): \enquote{Learning in
  repeated auctions with budgets: Regret minimization and equilibrium,}
  \emph{Management Science}, 65, 3952--3968.

\bibitem[\protect\citeauthoryear{Bar-Isaac and Shelegia}{Bar-Isaac and
  Shelegia}{2020}]{bash20}
\textsc{Bar-Isaac, Heski and Sandro Shelegia} (2020): \enquote{Search,
  Showrooming, and Retailer Variety,} Tech. rep., CEPR Discussion Paper No.
  DP15448.

\bibitem[\protect\citeauthoryear{Bar-Isaac and Shelegia}{Bar-Isaac and
  Shelegia}{2022}]{bash22}
---\hspace{-.1pt}---\hspace{-.1pt}--- (2022): \enquote{Monetizing steering,}
  Tech. rep., Centre for Economic Policy Research.

\bibitem[\protect\citeauthoryear{Bergemann and Bonatti}{Bergemann and
  Bonatti}{2019}]{bebo19}
\textsc{Bergemann, D. and A.~Bonatti} (2019): \enquote{Markets for Information:
  An Introduction,} \emph{Annual Review of Economics}, 11, 85--107.

\bibitem[\protect\citeauthoryear{Bergemann and Bonatti}{Bergemann and
  Bonatti}{2023}]{bebo22}
---\hspace{-.1pt}---\hspace{-.1pt}--- (2023): \enquote{Data, Competition, and
  Digital Platforms,} Tech. rep., Yale University.

\bibitem[\protect\citeauthoryear{Bergemann, Bonatti, and Gan}{Bergemann
  et~al.}{2022}]{bebg22}
\textsc{Bergemann, D., A.~Bonatti, and T.~Gan} (2022): \enquote{The Economics
  of Social Data,} \emph{RAND Journal of Economics}, 53, 263--296.

\bibitem[\protect\citeauthoryear{Bergemann, Brooks, and Morris}{Bergemann
  et~al.}{2015}]{bebm15}
\textsc{Bergemann, Dirk, Benjamin Brooks, and Stephen Morris} (2015):
  \enquote{The Limits of Price Discrimination,} \emph{American Economic
  Review}, 105, 921--957.

\bibitem[\protect\citeauthoryear{Calzolari and Denicol\`{o}}{Calzolari and
  Denicol\`{o}}{2015}]{cade15}
\textsc{Calzolari, Giacomo and Vincenzo Denicol\`{o}} (2015):
  \enquote{Exclusive contracts and market dominance,} \emph{American Economic
  Review}, 105, 3321--51.

\bibitem[\protect\citeauthoryear{Celis, Mehrotra, and Vishnoi}{Celis
  et~al.}{2019}]{cmv2019}
\textsc{Celis, Elisa, Anay Mehrotra, and Nisheeth Vishnoi} (2019):
  \enquote{Toward controlling discrimination in online ad auctions,} in
  \emph{International Conference on Machine Learning}, PMLR, 4456--4465.

\bibitem[\protect\citeauthoryear{{Competition \& Markets
  Authority}}{{Competition \& Markets Authority}}{2020}]{cma20}
\textsc{{Competition \& Markets Authority}} (2020): \enquote{Online platforms
  and digital advertising,} Tech. rep., UK Government.

\bibitem[\protect\citeauthoryear{de~Corni\`{e}re and de~Nijs}{de~Corni\`{e}re
  and de~Nijs}{2016}]{dede16}
\textsc{de~Corni\`{e}re, Alexandre and Romain de~Nijs} (2016): \enquote{Online
  Advertising and Privacy,} \emph{Rand Journal of Economics}, 47, 48--72.

\bibitem[\protect\citeauthoryear{Decarolis, Goldmanis, and Penta}{Decarolis
  et~al.}{2020}]{dgp2020}
\textsc{Decarolis, Francesco, Maris Goldmanis, and Antonio Penta} (2020):
  \enquote{Marketing agencies and collusive bidding in online ad auctions,}
  \emph{Management Science}, 66, 4433--4454.

\bibitem[\protect\citeauthoryear{Decarolis, Rovigatti, Rovigatti, and
  Shakhgildyan}{Decarolis et~al.}{2022}]{drrs2022}
\textsc{Decarolis, Francesco, Gabriele Rovigatti, Michele Rovigatti, and Ksenia
  Shakhgildyan} (2022): \enquote{Artificial Intelligence, Algorithmic Bidding
  and Collusion in Online Advertising,} Tech. rep., Bocconi University.

\bibitem[\protect\citeauthoryear{Deng, Golrezaei, Jaillet, Liang, and
  Mirrokni}{Deng et~al.}{2022{\natexlab{a}}}]{dgjl22}
\textsc{Deng, Yuan, Negin Golrezaei, Patrick Jaillet, Jason Cheuk~Nam Liang,
  and Vahab Mirrokni} (2022{\natexlab{a}}): \enquote{Individual Welfare
  Guarantees in the Autobidding World with Machine-learned Advice,} \emph{arXiv
  preprint}.

\bibitem[\protect\citeauthoryear{Deng, Mao, Mirrokni, Zhang, and Zuo}{Deng
  et~al.}{2022{\natexlab{b}}}]{dmmz22}
\textsc{Deng, Yuan, Jieming Mao, Vahab Mirrokni, Hanrui Zhang, and Song Zuo}
  (2022{\natexlab{b}}): \enquote{Efficiency of the First-Price Auction in the
  Autobidding World,} \url{https://arxiv.org/pdf/2208.10650.pdf}.

\bibitem[\protect\citeauthoryear{Deng, Mao, Mirrokni, and Zuo}{Deng
  et~al.}{2021}]{deng21}
\textsc{Deng, Yuan, Jieming Mao, Vahab Mirrokni, and Song Zuo} (2021):
  \enquote{Towards Efficient Auctions in an Auto-Bidding World,} in
  \emph{Proceedings of the Web Conference 2021}, 3965--3973.

\bibitem[\protect\citeauthoryear{Edelman, Ostrovsky, and Schwartz}{Edelman
  et~al.}{2007}]{edos07}
\textsc{Edelman, B., M.~Ostrovsky, and M.~Schwartz} (2007): \enquote{Internet
  Advertising and the Generalized Second Price Auction: Selling Billions of
  Dollars Worth of Keywords,} \emph{American Economic Review}, 97, 242--259.

\bibitem[\protect\citeauthoryear{Elliott, Galeotti, Koh, and Li}{Elliott
  et~al.}{2022}]{elgk20}
\textsc{Elliott, Matthew, Andrea Galeotti, Andrew Koh, and Wenhao Li} (2022):
  \enquote{Market Segmentation through Information,} Tech. rep., Cambridge
  University.

\bibitem[\protect\citeauthoryear{Fuchs and Skrzypacz}{Fuchs and
  Skrzypacz}{2015}]{fusk15}
\textsc{Fuchs, W. and A.~Skrzypacz} (2015): \enquote{Government Interventions
  in a dynamic Market with Adverse Selection,} \emph{Journal of Economic
  Theory}, 158, 371--406.

\bibitem[\protect\citeauthoryear{Goke, Weintraub, Mastromonaco, and
  Seljan}{Goke et~al.}{2022}]{gwms22}
\textsc{Goke, Shumpei, Gabriel~Y. Weintraub, Ralph Mastromonaco, and Sam
  Seljan} (2022): \enquote{Bidders' Responses to Auction Format Change in
  Internet Display Advertising Auctions,} in \emph{Proceedings of the 23rd ACM
  Conference on Economics and Computation}.

\bibitem[\protect\citeauthoryear{Golrezaei, Lobel, and Paes~Leme}{Golrezaei
  et~al.}{2021}]{glp2021}
\textsc{Golrezaei, Negin, Ilan Lobel, and Renato Paes~Leme} (2021):
  \enquote{Auction design for roi-constrained buyers,} in \emph{Proceedings of
  the Web Conference 2021}, 3941--3952.

\bibitem[\protect\citeauthoryear{Haghpanah and Siegel}{Haghpanah and
  Siegel}{2022}]{hasi22}
\textsc{Haghpanah, N. and R.~Siegel} (2022): \enquote{The Limits of
  Multi-Product Discrimination,} \emph{American Economic Review: Insights}, 4,
  443--458.

\bibitem[\protect\citeauthoryear{Kanoria and Nazerzadeh}{Kanoria and
  Nazerzadeh}{2020}]{kn2020}
\textsc{Kanoria, Yash and Hamid Nazerzadeh} (2020): \enquote{Dynamic reserve
  prices for repeated auctions: Learning from bids,} \emph{arXiv:2002.07331}.

\bibitem[\protect\citeauthoryear{Kirpalani and Philippon}{Kirpalani and
  Philippon}{2021}]{kiph21}
\textsc{Kirpalani, R. and T.~Philippon} (2021): \enquote{Data Sharing and
  Market Power with Two-Sided Platforms,} Tech. Rep. 28023, NBER.

\bibitem[\protect\citeauthoryear{Konstantinovic}{Konstantinovic}{2023}]{insi23}
\textsc{Konstantinovic, D.} (2023): \enquote{Ad duopoly looks over its shoulder
  as Amazon ad revenues jump 26\%,}
  \url{https://www.insiderintelligence.com/content/ad-duopoly-looks-over-its-shoulder-amazon-ad-revenues-jump-26}.

\bibitem[\protect\citeauthoryear{Lebow}{Lebow}{2023}]{lebo23}
\textsc{Lebow, S.} (2023): \enquote{Digital Ad Spend Worldwide to Pass $\$ 600$
  Billion This Year,}
  \url{https://www.insiderintelligence.com/content/digital-ad-spend-worldwide-pass-600-billion-this-year}.

\bibitem[\protect\citeauthoryear{Li and Lei}{Li and Lei}{2023}]{ll2023}
\textsc{Li, Bin and Yahui Lei} (2023): \enquote{Mechanism Design for Ad
  Auctions with Display Prices,} \emph{arXiv:2303.13192}.

\bibitem[\protect\citeauthoryear{Liaw, Mehta, and Perlroth}{Liaw
  et~al.}{2022}]{lmp2022}
\textsc{Liaw, Christopher, Aranyak Mehta, and Andres Perlroth} (2022):
  \enquote{Efficiency of non-truthful auctions under auto-bidding,}
  \emph{arXiv:2207.03630}.

\bibitem[\protect\citeauthoryear{Mehta}{Mehta}{2022}]{meht22}
\textsc{Mehta, Aranyak} (2022): \enquote{Auction Design in an Auto-bidding
  Setting: Randomization Improves Efficiency: Beyond VCG,} in \emph{WWW '22:
  Proceedings of the ACM Web Conference 2022}, 173–181.

\bibitem[\protect\citeauthoryear{Mikl{\'o}s-Thal and Shaffer}{Mikl{\'o}s-Thal
  and Shaffer}{2021}]{mish19}
\textsc{Mikl{\'o}s-Thal, Jeanine and Greg Shaffer} (2021): \enquote{Input price
  discrimination by resale market,} \emph{RAND Journal of Economics}, 52,
  727--757.

\bibitem[\protect\citeauthoryear{Motta and Penta}{Motta and
  Penta}{2022}]{mope22}
\textsc{Motta, Massimo and Antonio Penta} (2022): \enquote{Market effects of
  sponsored search auctions,} Tech. Rep. DP17401, CEPR.

\bibitem[\protect\citeauthoryear{Nasr and Tschantz}{Nasr and
  Tschantz}{2020}]{nt2020}
\textsc{Nasr, Milad and Michael~Carl Tschantz} (2020): \enquote{Bidding
  strategies with gender nondiscrimination constraints for online ad auctions,}
  in \emph{Proceedings of the 2020 Conference on Fairness, Accountability, and
  Transparency}, 337--347.

\bibitem[\protect\citeauthoryear{Nedelec, Calauz{\`e}nes, El~Karoui, Perchet
  et~al.}{Nedelec et~al.}{2022}]{nccep2022}
\textsc{Nedelec, Thomas, Cl{\'e}ment Calauz{\`e}nes, Noureddine El~Karoui,
  Vianney Perchet, et~al.} (2022): \enquote{Learning in repeated auctions,}
  \emph{Foundations and Trends{\textregistered} in Machine Learning}, 15,
  176--334.

\bibitem[\protect\citeauthoryear{Ostrovsky and Schwarz}{Ostrovsky and
  Schwarz}{2023}]{ossc23}
\textsc{Ostrovsky, M. and M.~Schwarz} (2023): \enquote{Reserve Prices in
  Internet Advertising Auctions: A Field Experiment,} \emph{Journal of
  Political Economy}, forthcoming.

\bibitem[\protect\citeauthoryear{Philippon and Skreta}{Philippon and
  Skreta}{2012}]{phsk12}
\textsc{Philippon, Thomas and Vasiliki Skreta} (2012): \enquote{Optimal
  interventions in markets with adverse selection,} \emph{American Economic
  Review}, 102, 1--28.

\bibitem[\protect\citeauthoryear{Stapleton}{Stapleton}{2021}]{inve21}
\textsc{Stapleton, C.} (2021): \enquote{Amazon's vs. Alibaba's Business Models:
  What's the Difference?}
  \url{https://www.investopedia.com/articles/investing/061215/difference-between-amazon-and-alibabas-business-models.asp}.

\bibitem[\protect\citeauthoryear{Teh and Wright}{Teh and Wright}{2022}]{tewr22}
\textsc{Teh, Tat-How and Julian Wright} (2022): \enquote{Intermediation and
  steering: Competition in prices and commissions,} \emph{American Economic
  Journal: Microeconomics}, 14, 281--321.

\bibitem[\protect\citeauthoryear{Tirole}{Tirole}{2012}]{tiro12}
\textsc{Tirole, Jean} (2012): \enquote{Overcoming adverse selection: How public
  intervention can restore market functioning,} \emph{American Economic
  Review}, 102, 29--59.

\bibitem[\protect\citeauthoryear{Topkis}{Topkis}{1978}]{topk78}
\textsc{Topkis, D.M.} (1978): \enquote{Minimizing a Submodular Function on a
  Lattice,} \emph{Mathematics of Operations Research}, 27, 305--323.

\bibitem[\protect\citeauthoryear{Varian}{Varian}{1980}]{vari80}
\textsc{Varian, Hal} (1980): \enquote{A Model of Sales,} \emph{American
  Economic Review}, 70, 651--659.

\bibitem[\protect\citeauthoryear{Varian}{Varian}{2022}]{varian22}
---\hspace{-.1pt}---\hspace{-.1pt}--- (2022): \enquote{Advertising Costs and
  Product Prices,} \emph{The Journal of Law and Economics}, 65, S419--S431.

\bibitem[\protect\citeauthoryear{Wang and Wright}{Wang and
  Wright}{2020}]{wawr20}
\textsc{Wang, Chengsi and Julian Wright} (2020): \enquote{Search platforms:
  Showrooming and price parity clauses,} \emph{RAND Journal of Economics}, 51,
  32--58.

\end{thebibliography}

\newpage

\section{Online Appendix}
\subsection{Comparative Statics}\label{app:comparative}

\subsubsection{Data-Augmented Bidding}

We present additional welfare comparative statics for the data-augmented bidding model. We will first define several  welfare objects of interest, as functions of the posted price $p$. The expected consumer surplus of an off-platform consumer is the difference between their willingness-to-pay and price, which is 
\[ CS_{\off}(p) = \int_p^{\bv} (v-p) \ d F(v). \]
Note that only off-platform consumers with value above $p$ receive a positive surplus, since consumers with low values do not buy.
The expected consumer surplus of an on-platform consumer is:
\[ CS_{\on}(p) = \int_p^{\bv} (v-p) dF^J(v).  \]
Again, only consumers with sufficiently high values receive positive surplus, since low-value consumers are price discriminated against.
Because $F^J$ describes the distribution of the highest-order statistic, the expected welfare of an on-platform consumer is always larger than an off-platform consumer's.

Similarly, the expected welfare per consumer off-platform is given by 
\[ W_\off(p) = \int_p^{\bv} v \ dF(v), \]

Note that we focus on off-platform welfare because under our model, the platform has full information and sales are always made, there is full welfare on-platform.

\begin{proposition}[Comparative Statics]\label{prop:bidding_comp_statics}\strut
Suppose $J > -1/(\ln F(\bv-p_M))$. The following comparative statics hold:
    \begin{enumerate}
        \item The equilibrium posted price with data-augmented bidding $p_B$ is decreasing in $J$. 
        \item The expected consumer surplus of on-platform and off-platform consumers is increasing in $J$.
        \item Off-platform welfare per consumer $W_\off$ is increasing in $J$.  
    \end{enumerate}
\end{proposition}

\begin{proof}
    

By rescaling the best-response profit condition solved by the bidding problem, the bidding price is a maximizer of
\[ \Pi(p) = (1-\lambda)p(1-F(p)) + \lambda \left(\int_{\uv}^{\bv}\int_{\uv}^v \min(v - v', p) J dF^{J-1}(v') dF(v) \right) . \]
Note that the cross-partial derivative with respect to $J, p$ is 
\begin{eqnarray*}
\frac{\partial^2 \Pi}{\partial J \partial p }&=&\int_{p}^{\bv} (1 + J \ln F(v - p)) F^{J-1}(v-p) dF(v).
\end{eqnarray*}
Since by assumption $ 1 < - J \ln F(\bar{v} - p_M)$, $1 + J \ln F(v-p) \le 1 + J \ln F(\bar{v} - p_M) < 0$ for $p \ge p_M$. Hence this derivative is negative, so by Topkis's theorem, it thus follows that $p_B$ must be decreasing in $J$.

The second statement follows from the first and the fact that $CS_\off$ and $CS_\on$ are both decreasing in $p$. Since $CS_\on$ is increasing in $J$, it follows that both quantities are increasing in $J$. Finally, the total off-platform welfare per consumer is  decreasing in $p$, and has no other $\lambda$ or $J$ dependence, so $W_\off$ is increasing in $J$. 
\end{proof}

We illustrate many of these comparative statics with a simple example. 

\paragraph{Example} Consider the setting where the distribution $F$ is uniform on $[0,1]$. Note that in this setting, since the distribution is uniform, the monopoly price is $p_M = 0.5$. We plot the equilibrium posted prices, total firm profit, and consumer surplus resulting from data-augmented bidding for $J = 3, 5, 7$ in Figure \ref{fig:bidding_prices}. As shown in Proposition \ref{prop:bidding_comp_statics}, for any $J$, the prices are increasing in $\lambda$.  
\begin{figure}[ht]
    \centering
    \includegraphics[width=0.6\textwidth]{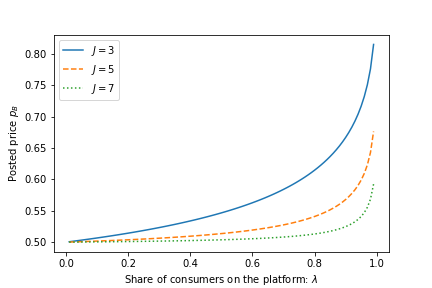}
    \caption{Posted prices as a function of $\lambda$. Results are plotted for $J= 3 ,5, 7$.}
    \label{fig:bidding_prices}
\end{figure}

\begin{figure}[ht]
    \centering
     \begin{subfigure}[b]{0.49\textwidth}
         \centering
         \includegraphics[width=\textwidth]{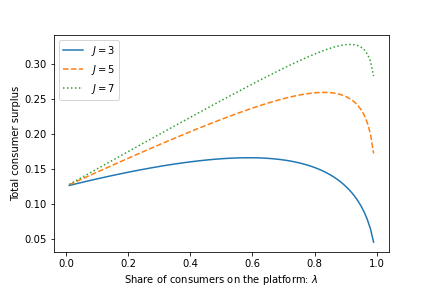}
         \caption{Total Consumer Surplus}
         \label{fig:bidding_cons_surplus}
     \end{subfigure}
     \hfill
     \begin{subfigure}[b]{0.49\textwidth}
         \centering
         \includegraphics[width=\textwidth]{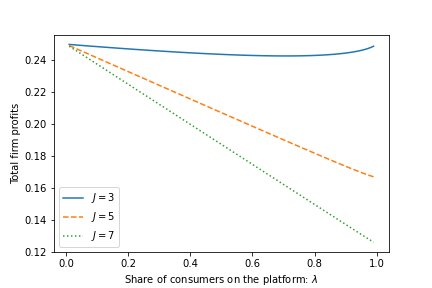}
         \caption{Total Firm profit}
         \label{fig:bidding_firm_profit}
     \end{subfigure}
     \\
     \begin{subfigure}[b]{0.49\textwidth}
         \centering
         \includegraphics[width=\textwidth]{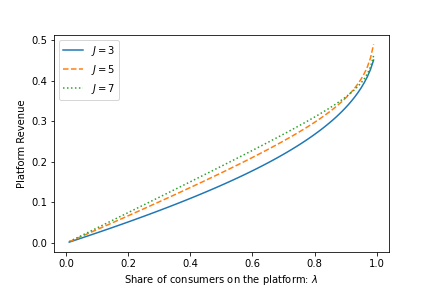}
         \caption{Platform Revenue}
         \label{fig:bidding_platform_revenue}
     \end{subfigure}
     \hfill
     \begin{subfigure}[b]{0.49\textwidth}
         \centering
         \includegraphics[width=\textwidth]{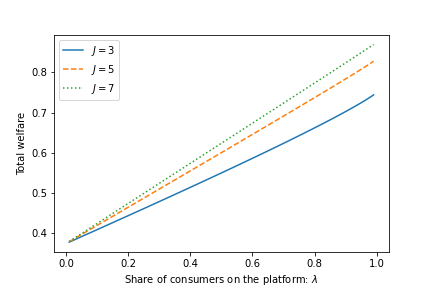}
         \caption{Total Welfare}
         \label{fig:bidding_welfare}
     \end{subfigure}
    \caption{Consumer surplus, firm profit, platform revenue, and welfare with data-augmented bidding as a function of the share of consumers on the platform, $\lambda$. Results are plotted for $J = 3, 5, 7$.}
    \label{fig:bidding_figure}
\end{figure}

Figure \ref{fig:bidding_cons_surplus} depicts the consumer surplus as a function of $\lambda$. We note that total consumer surplus is increasing in $\lambda$. Initially, the welfare gains from moving consumers from being loyal to shopping over all firms dominate (moving consumers from welfare level $CS_\off$ to $CS_\on$) but as the platform becomes too large, the increasing ability to price discriminate on the platform dominates and consumers lose welfare. Hence, total consumer surplus is nonmonotone in $\lambda$. 

Figure \ref{fig:bidding_firm_profit} depicts firm profit as a function of $\lambda$. Here, firm profit for $J = 3$ are nonmonotone. As mentioned in Proposition \ref{prop:bidding_comp_statics}, the profit per consumer off-platform is decreasing in $\lambda$ and the profit per consumer on-platform is increasing in $\lambda$, and so the overall effect on total profit depends on which force dominates.

Figure \ref{fig:bidding_platform_revenue} depicts the platform revenues as a function of $\lambda$. As expected, platform revenues are increasing in $\lambda$. However, the interesting feature of this example in platform revenue is that for very large platforms, $\lambda$ close to 1, the platform revenue can be nonmonotone in $J$, the number of firms. The two contrasting forces here are that with more firms, the expected value of second-highest bids will be higher, which would suggest that platform revenue should be increasing in $J$. However, with more firms, as shown in Figure \ref{fig:bidding_prices}, posted prices can be pushed down, thus reducing the price-discriminating ability of the firms on-platform and pushing down the platform revenues. 

Figure \ref{fig:bidding_welfare} shows that total welfare is increasing in both $\lambda$ and $J$, as would be expected.

\subsubsection{Managed Campaign}
Importantly, the optimal sophisticated managed campaign influences the effect of competition on consumer welfare. Recall that in the data-augmented bidding model, Proposition \ref{prop:bidding_comp_statics} showed that the expected welfare of both on-platform and off-platform consumers is increasing in the number of firms $J$. However, under the best-value managed campaign, this effect is reversed for the off-platform consumers, and may be reversed for on-platform consumers as well. 

\begin{proposition}[Managed Campaign Comparative Statics]\label{prop:managed_cs}
    The best-value managed campaign price is increasing in $J$. The expected surplus of an off-platform consumer is decreasing in $J$.
\end{proposition}
\begin{proof}
 The cross-partial derivative of the vertical integration profit (multiplied by $J$) is 
 \[ \frac{\partial^2 \Pi_C}{\partial p \partial J} =  - \lambda F^J(p) \ln F(p) > 0.\]
By Topkis's theorem, the largest maximizer $p_V$ is therefore increasing in $J$. The consumer surplus comparative static follows since the off-platform consumer surplus is weakly decreasing in $p_V$. 
\end{proof}

Proposition \ref{prop:managed_cs} shows that the expected surplus of off-platform consumers decreases with the number of firms, since the off-platform prices \textit{increase} with the number of firms on the platform. 

\begin{figure}[ht]
     \centering
     \begin{subfigure}[b]{0.45\textwidth}
         \centering
         \includegraphics[width=\textwidth]{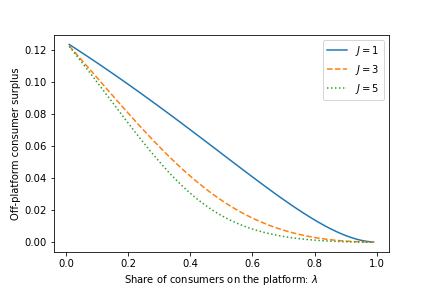}
         \caption{Expected consumer surplus off-platform}
         \label{fig:managed_off_cs}
     \end{subfigure}
     \hfill
     \begin{subfigure}[b]{0.45\textwidth}
         \centering
         \includegraphics[width=\textwidth]{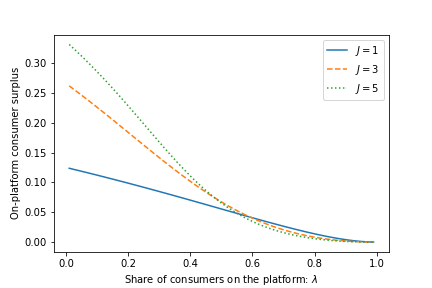}
         \caption{Expected consumer surplus on-platform}
         \label{fig:managed_on_cs}
     \end{subfigure}
     \\
     \begin{subfigure}[b]{0.45\textwidth}
         \centering
         \includegraphics[width=\textwidth]{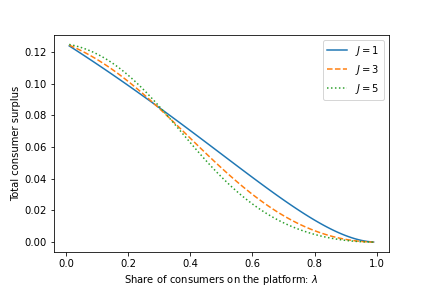}
         \caption{Total consumer surplus}
         \label{fig:managed_total_cs}
     \end{subfigure}
        \caption{Managed campaign consumer surplus and competition. Distribution of consumer values $F$ is uniform on $[0,1]$, plotted for varying $\lambda$ with $J =  1, 3, 5$.}
        \label{fig:managed_cs}
\end{figure}

On the other hand, for on-platform consumers, more firms implies that an on-platform consumer's favorite firm is likely to be more valuable (the first-order statistic is larger), which counteracts the higher off-platform prices and greater price discrimination. We present an example illustrating the tension between these two forces in Figure \ref{fig:managed_cs}. In the example, we take the consumer values to be distributed uniformly on the unit interval, and plot the expected surplus of a consumer off-platform, expected surplus of a consumer on-platform, and total consumer surplus varying the number of firms for $J = 1, 3, 5$. As indicated by Proposition \ref{prop:managed_cs}, the off-platform consumer surplus is decreasing for larger $J$, regardless of platform size. For on-platform consumers, when the platform is relatively small (lower $\lambda$), the larger off-platform market dampens on-platform price discrimination, and the surplus of on-platform consumers increases with more firms, as the on-platform consumers gain from having more firms to choose between. However, when the platform is relatively large, the increase in price discrimination with more firms results in reduced surplus for on-platform consumers despite having more firms to choose from. As such, the total consumer surplus is increasing in $J$ for small platforms (low $\lambda$) but decreasing in $J$ for large platforms (high $\lambda$).

\subsubsection{Independent Campaigns vs. Bidding}
 
We  now discuss the  implications of independent managed campaigns relative to data-augmented bidding for both posted price and welfare. We compare the independent managed campaign induced by perfect price discrimination (pricing policy $p_i(v)  = \min(v, p_i)$) to data-augmented bidding.

\begin{proposition}[Price and Welfare Comparisons]\label{prop:comparison_appendix}
Suppose the monopoly profit function $p (1 - F(p))$ is concave. 
If $F^{J-1}$ is convex, then  $ p_{I} \le p_{B}$, and total welfare and total consumer surplus are higher in the independent managed campaign than in data-augmented bidding. If $F^{J-1}$ is concave, all the inequalities are reversed. 
\end{proposition}
\begin{proof}
If the monopoly profit function is concave, then $p_I$ and $p_B$ are the unique solutions to their respective first-order conditions. Recall that $p_B$ satisfies 
\[(1-\lambda )(1-F(p_{B})-p_{B}f(p_{B}))+\lambda J\int_{p_{B}}^{\bv} F^{J-1}(v-p_{B})dF(v)=0, \]
and $p_I$ satisfies
\[(1-\lambda )(1-F(p_{I})-p_{I}f(p_{I}))+\lambda J\int_{p_{I}}^{\bv} (F^{J-1}(v)-p_{I} dF^{J-1}(v))\mathbbm{1}[p_I \in P(v;p_I)] dF(v)=0. \]
Under the assumption that $F^{J-1}$ is convex, then we have that 
\[ F^{J-1}(v - p) \ge F^{J-1}(v) - p dF^{J-1}(v). \]
since the right-hand side is a first-order expansion of $F^{J-1}$ around $v$. 
Additionally, $F^{J-1}(v - p) \ge 0$. 
Therefore, the left-hand sides of the first-order condition for $p_I$ is lower than the first-order condition for $p_B$, and hence $p_B \ge p_I$.

Note that if $F^{J-1}$ is concave, then the inequality order reverses:
\[ 0 \le F^{J-1}(v - p) \le F^{J-1}(v) - p dF^{J-1}(v), \]
since the right-hand side is a first-order expansion of $F^{J-1}$ around $v$. As $F^{J-1}(v) - p dF^{J-1}(v)$ is positive, it follows that $P(v;p_I) = \{ p_I \}$, since the maximizing function is always increasing; hence, $p_I$ solves the first-order condition 
\[(1-\lambda )(1-F(p_{I})-p_{I}f(p_{I}))+\lambda J\int_{p_{I}}^{\bv}(F^{J-1}(v)-p_{I} dF^{J-1}(v)) dF(v)=0. \]
Using the same argument as before, $p_B \le p_I$. As total welfare and total consumer surplus decrease in $p$, the welfare comparative statics follow. 
\end{proof}

Proposition \ref{prop:comparison_appendix} shows that allowing the platform to run an independent managed campaign can create a more competitive environment relative to data-augmented bidding; the threat of poaching is larger, and the competition for on-platform consumers dominates.

To interpret the condition that $F^{J-1}$ is convex, note that $F^{J-1}$ represents the cumulative distribution function of the maximum of $J-1$ values drawn from $F$. For large enough $J$,  this cumulative distribution function is  convex under relatively weak conditions. Indeed, if the density $f$ is such that $f'/f$ is bounded below, then there always exists a $J$ large enough such that $F^{J-1}$ is convex. 



We now discuss the implications of independent managed campaigns for platform revenue. Intuitively, since the joint profit of the platform and firms increases with posted prices up to $p_V$, the platform revenue ordering between bidding and the independent campaign should follow the price ranking. More precisely,

\begin{proposition}[Revenue Comparison]\label{prop:rev-comp}
    If $p_B \ge p_I$, platform revenue in a bidding model with participation fees is weakly higher than in the independent managed campaign. Otherwise, the platform earns less in the bidding model with participation fees relative to the independent managed campaign.
\end{proposition}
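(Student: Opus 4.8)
The plan is to exploit Corollary \ref{corr:vertical_integration} and its underlying computation, which identifies the joint producer surplus (firm profit plus platform revenue) with the objective of a vertically integrated monopolist
\[
\Psi(p) := (1-\lambda)\,p(1-F(p)) + \lambda \int \min(v,p)\,dF^J(v),
\]
and shows that $\Psi$ is maximized at $p = p_V$. The key structural fact I would establish first is that $\Psi$ is single-peaked (quasiconcave) in $p$ with its unique interior maximizer at $p_V$: this follows because its derivative, computed in the proof of Theorem \ref{thm:best_value_max}, is $J$ times
\[
\tfrac{1-\lambda}{J}\bigl(1-F(p)-pf(p)\bigr) + \lambda \int_p F^{J-1}(v)\,dF(v),
\]
which by log-concavity of $f$ crosses zero exactly once from above to below; hence $\Psi$ is increasing on $[0,p_V]$ and decreasing on $[p_V,1]$.

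Next I would pin down the producer surplus in each of the two mechanisms being compared. In the independent managed campaign, the platform charges each firm a lump-sum transfer that extracts all surplus above the outside option $\Pi_O$ from \eqref{eqn:outside}; since the firms post price $p_I$ and the platform steers efficiently, the total producer surplus realized equals $\Psi(p_I)$, of which the firms retain exactly $J\cdot\Pi_O$ (their outside option) and the platform keeps $\Psi(p_I) - J\cdot\Pi_O$. In the bidding model with participation fees (Proposition \ref{prop:bidding_participation}), firms post $p_B$, the allocation is again efficient, so total producer surplus is $\Psi(p_B)$, firms again retain exactly $J\cdot\Pi_O$, and the platform keeps $\Psi(p_B) - J\cdot\Pi_O$. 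Thus the platform-revenue comparison reduces entirely to comparing $\Psi(p_B)$ with $\Psi(p_I)$.

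The conclusion then follows from single-peakedness. Both $p_B$ and $p_I$ lie in $[p_M, p_V]$: we have $p_B \ge p_M$ by Proposition \ref{prop:bidding_price_larger}, $p_B \le p_V$ by Theorem \ref{thm:managed_prices_highest_variant}, and $p_I \in [p_M, p_V]$ with $p_I \ge p_M$ from \eqref{eqn:indep_price} and $p_I \le p_V$ from Proposition \ref{prop:comparison}(1). On $[p_M, p_V]$ the function $\Psi$ is nondecreasing (it is increasing on $[0,p_V]$), so $p_B \ge p_I$ implies $\Psi(p_B) \ge \Psi(p_I)$, i.e.\ weakly higher platform revenue in the bidding model; and $p_B < p_I$ implies $\Psi(p_B) < \Psi(p_I)$, giving the reverse strict inequality. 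The main obstacle is the bookkeeping step: one must verify carefully that in \emph{both} mechanisms the firms' realized payoff is exactly the same object $\Pi_O$ (so that the common term $J\cdot\Pi_O$ cancels), which requires checking that the outside option relevant to the independent managed campaign coincides with \eqref{eqn:outside} — i.e.\ that a firm deviating to non-participation faces the same continuation value in both games — and that efficient steering makes total producer surplus equal to $\Psi$ evaluated at the realized posted price in each case. Once that reduction is in place, the monotonicity argument is immediate.
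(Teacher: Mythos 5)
Your proposal is correct and follows essentially the same route as the paper's own proof: both reduce the platform-revenue comparison to a producer-surplus comparison (since firms are held to the common outside option $\Pi_O$ in both mechanisms), identify producer surplus with the vertically integrated objective $\Psi(p)$ maximized at $p_V$ (Theorem \ref{thm:best_value_max}), and conclude from $p_B, p_I \le p_V$ and single-peakedness that the ranking of $\Psi(p_B)$ and $\Psi(p_I)$ tracks the ranking of the prices. Your version simply makes explicit the concavity/single-peakedness check and the outside-option bookkeeping that the paper's proof asserts in passing.
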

\begin{proof}
    Note that in both models, the firms are held to their outside options. Hence, whether the platform earns more depends exactly on the producer surplus extracted. By Theorem \ref{thm:best_value_max}, the off-platform price $p_V$ induced by the sophisticated managed campaign maximizes producer surplus. By Theorem \ref{thm:managed_prices_highest_variant}, $p_V \ge p_B, p_I$. Since producer surplus is concave in the off-platform price, $p_V$ maximizes producer surplus, and $p_V \ge p_B, p_I$, the producer surplus is larger in the bidding model iff $p_B \ge p_I$. 
\end{proof}
\begin{figure}[htbp]
    \centering
     \begin{subfigure}[b]{0.495\textwidth}
         \centering
         \includegraphics[width=\textwidth]{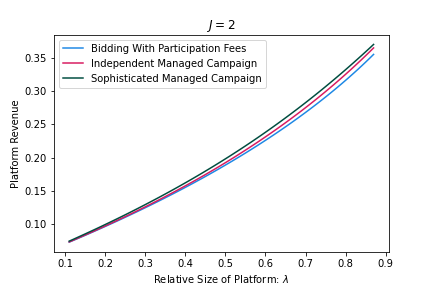}
         \caption{Platform revenue as a function of $\lambda$, $J=2$}
         \label{fig:platform_revenue_2_firms}
     \end{subfigure}
     \hfill
     \begin{subfigure}[b]{0.495\textwidth}
         \centering
         \includegraphics[width=\textwidth]{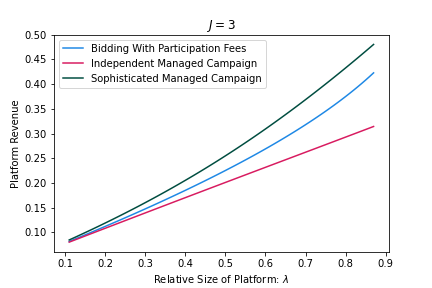}
         \caption{Platform revenue as a function of $\lambda$, $J=3$}
         \label{fig:platform_revenue_3_firms}
     \end{subfigure}
    \caption{Platform revenue with consumer values distributed as $F(v) = v^{3/4}$. With $J=2$, $F^{J-1}$ is concave, and with $J=3$, $F^{J-1}$ is convex. In the first figure for $J=2$, the platform revenue from the independent managed campaign lies in between the sophisticated managed campaign revenue and bidding with participation fees. In the second figure, the relative ordering of bidding and independent campaigns are switched.}
    \label{fig:platform_revenue}
\end{figure}
In Figure \ref{fig:platform_revenue}, we plot the revenue generated by the platform in the bidding model and the independent managed campaign as functions of $\lambda$ when consumer values are drawn from value distribution $F(v) = v^{3/4}$. Figure \ref{fig:platform_revenue_2_firms} shows the revenue when there are $J=2$ firms, and Figure \ref{fig:platform_revenue_3_firms} plots the revenue for $J=3$ firms. Figure \ref{fig:platform_revenue_2_firms} demonstrates a scenario where the independent managed campaign yields more revenue,  and Figure \ref{fig:platform_revenue_3_firms} demonstrates a case where data-augmented bidding yields more revenue. However, if we allow the platform to charge participation fees, it is clear the platform earns more revenue than in the standard bidding model without a participation fee. It is also true that in a bidding model with participation fees, the platform earns more than in an independent managed campaign. Finally, the sophisticated managed campaign results in higher platform revenue than the independent managed campaign and bidding, as would be expected by Theorem \ref{thm:best_value_max}.

\subsubsection{Cohort Privacy}

\begin{figure}[htbp]
\centering
\begin{subfigure}[b]{0.495\textwidth}
         \centering
         \includegraphics[width=\textwidth]{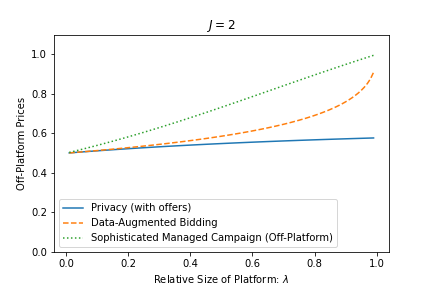}
         \caption{Off-platform prices as a function of $\lambda$, $J=2$}
         \label{fig:priv_prices_2_firms}
     \end{subfigure}
\hfill 
\begin{subfigure}[b]{0.495\textwidth}
         \centering
         \includegraphics[width=\textwidth]{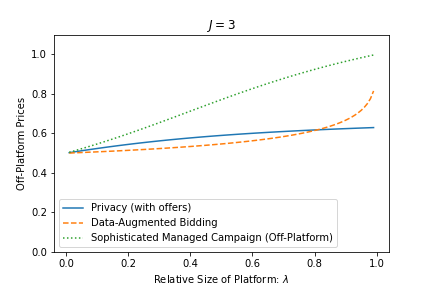}
         \caption{Off-platform prices as a function of $\lambda$, $J=3$}
         \label{fig:priv_prices_3_firms}
    \end{subfigure}
\caption{Off-platform prices under the privacy restriction, with and without
sponsored offers, compared to the benchmarks (sophisticated managed campaign
and data-augmented bidding). Prices are plotted as a function of $\protect%
\lambda$ for a uniform distribution of values and $J=2,3$. In the large $%
\protect\lambda$ limit for $J=2$, the no-offer price is larger than the
with-offer price.}
\label{fig:priv_prices}
\end{figure}
Figure \ref{fig:priv_prices} depicts the posted prices for a uniform
distribution of values, for $J = 2, 3$ firms. The plots vary the share of
on-platform consumers $\lambda$. Note that for a uniform distribution, the
monopoly price is $p_M = 0.5$. As would be expected from Proposition \ref%
{prop:partial_privacy}, the sophisticated managed campaign price is highest,
and the privacy price with offers $p_P$ is above $p_M =0.5$ but below the
managed campaign price $p_V$. However, the relative ordering of the bidding
price and the privacy price is ambiguous: for $J=3$, on smaller platforms
the bidding price $p_B$ can be lower than $p_P$. Intuitively, for
sufficiently many firms, the competitive effect of bidders on each other
profits sufficiently outweighs the incentives to raise prices. The welfare
implications of privacy are more ambiguous, as we plot in Figure \ref%
{fig:priv_welfare} for a uniform distribution of consumer values.

\begin{figure}[htbp]
\centering
\begin{subfigure}[b]{0.495\textwidth}
         \centering
         \includegraphics[width=\textwidth]{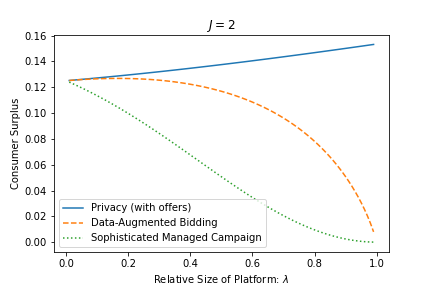}
         \caption{Consumer surplus as a function of $\lambda$, $J=2$}
         \label{fig:priv_cs_2_firms}
     \end{subfigure}
\hfill 
\begin{subfigure}[b]{0.495\textwidth}
         \centering
         \includegraphics[width=\textwidth]{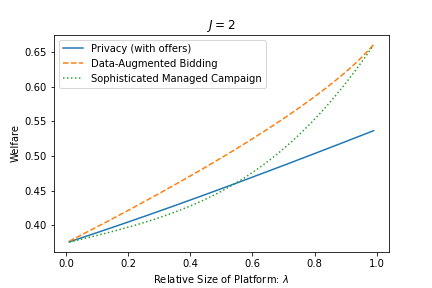}
         \caption{Total welfare as a function of $\lambda$, $J=2$}
         \label{fig:priv_welfare_2_firms}
     \end{subfigure}
\caption{Welfare implications of privacy, with a uniform distribution of
consumer values. }
\label{fig:priv_welfare}
\end{figure}

\begin{figure}[htbp]
    \centering

     \begin{subfigure}[b]{0.495\textwidth}
         \centering
         \includegraphics[width=\textwidth]{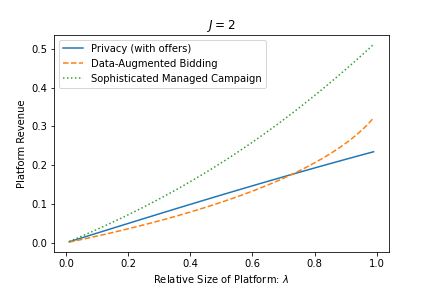}
         \caption{Platform revenue, $J=2$}
         \label{fig:priv_platform_revenue_3}
     \end{subfigure}
     \hfill
     \begin{subfigure}[b]{0.495\textwidth}
         \centering
         \includegraphics[width=\textwidth]{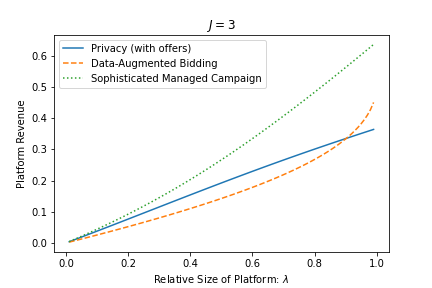}
         \caption{Platform revenue, $J = 3$}
         \label{fig:priv_platform_revenue_2}
     \end{subfigure}
     \caption{Privacy implications on platform revenue, $J=2,3$.}
     \label{fig:privacy_platform_revenue}
\end{figure}
\subsection{Robustness, RET, and Participation Fees}

\subsubsection{Bidding Auction Format}\label{sec:fpa}
In the main text, we chose a second-price auction for convenience. The main insight of Theorem \ref{thm:bidding_eq} extend to the case of the first-price auction.
 \begin{proposition}\label{prop:first_price}
     Suppose the platform ran a first-price auction instead of a second-price auction (breaking ties in favor of firms with higher value). In the unique symmetric equilibrium in undominated strategies, the firms post price $p_B$ satisfying the same condition \eqref{eqn:bidding_price} as in Theorem \ref{thm:bidding_eq}. 
 \end{proposition}
 \begin{proof}
     To show that the first-price auction induces the same posted price outcome as in Theorem \ref{thm:bidding_eq}, we argue first that Proposition \ref{prop:bidding_efficiency} still holds. Fix any vector of off-platform prices $\bar{p}$. Define $\underline{u}(v, \bar{p})$ as in \eqref{eq:outside}, and once again note that this is the outside option of a consumer on the platform. Hence, each $j$ must concede at least this much utility to the consumer, and can hence charge at most $p_j(v) = (v_j - \underline{u}(v, \bar{p}))_+$. As we argued, since $\underline{u}$ is common to all firms, the highest-valued firm is able to charge the most. In particular, suppose firm $j'$ has the second-highest value, and $j$ the highest value: $v_j > v_{j'}$. Then $j$ always wins in a first-price auction by bidding arbitrarily close to, but above $(v_{j'} - \underline{u}(v,\bar{p}))_+$. Since the platform breaks ties in favor of the firm with the highest value for the consumer, the firm $j$ bids $(v_{j'} - \underline{u}(v,\bar{p}))_+$ and wins.

However, we just showed that firm $j$ bids exactly firm $j'$'s value, and so pays exactly the same that firm $j$ would pay in the second-price auction, implementing the same allocation as the second-price auction. Since the payments are the same for the firms, the rest of the analysis in the proof of Theorem \ref{thm:bidding_eq} follows, so the equilibrium posted price outcome is the same as in \eqref{eqn:bidding_price}.
 \end{proof}

\subsubsection{Bidding with Participation Fees}
In the bidding model introduced in the main text, the platform received revenues only from the bids of the advertisers. We now ask whether tools from optimal auction design (namely, participation fees) may increase the revenue of the platform.\footnote{The importance of such tools in online ad auctions has been widely documented, e.g., by \cite{ossc23} for the case of reserve prices.} In particular, as advertisers have no prior information about the consumers, we investigate how a participation fee for the second-price auction would affect the division of surplus between the platform and advertisers.  Thus, we consider the following game:

\begin{enumerate}
    \item The platform sets a participation fee $T$.
    \item The firms choose whether to pay the participation fee and set their posted prices.  
    \item All firms observe participation decisions and posted prices. The platform runs a second-price auction for each on-platform consumer among the participating  firms. 
\end{enumerate}
The platform maximizes revenue, and we will assume a firm that is indifferent about accepting chooses to accept. As such, the platform extracts all the producer surplus, up to an outside option the firm could obtain by refusing to participate.

\begin{proposition}[Equilibrium with Participation Fees]\label{prop:bidding_participation}
    In equilibrium, all firms join and the off-platform posted prices are given by \eqref{eqn:bidding_price}. Firms bid the same as in the bidding equilibrium. Firm profits are held to their outside option \eqref{eqn:outside}.
    The fee charged by the platform holds firms to this outside option.
\end{proposition}
\begin{proof}
    By Proposition \ref{prop:bidding_efficiency}, the firm willing to pay the most for any consumer regardless of off-platform prices is the firm which the consumer has the highest value for; hence, it is not revenue optimal for the platform to exclude any firm from participating. Consider the subgame after all firms have paid the participation fee. Subgame perfection and Theorem \ref{thm:bidding_eq} imply that the pricing condition for off-platform prices is given by \eqref{eqn:bidding_price}, and firms bid their true value $\max(v_j, p_B)$. It is then straightforward to see that the maximum participation fee must hold the firm's profit to what they could get from being excluded. Consider the profit firm $i$ makes after refusing participation, with all other firms joining the platform. By Proposition \ref{prop:bidding_efficiency}, the firm can never get a sale from a consumer whose favorite firm is $i \neq j$; further, the highest bidder for a consumer whose favorite firm is $i$ is exactly firm $j$ such that $j = \arg\max_{j \neq i} v_j$. Note that $v_j$ can undercut the posted price firm $i$ sets by pricing down to (possibly) zero; so firm $i$ can only retain on-platform sales from the consumers such that $v_i - p \ge v_j - 0$. That is, the on-platform sales are 
    \[ \int_p^{\bv} p F^{J-1}(v - p) dF(v). \]
    Combining, we find the exclusion profit is given by \eqref{eqn:outside}, and the result follows. 
\end{proof}

Intuitively, the pricing and bidding behavior follow as in Theorem \ref{thm:bidding_eq} due to subgame perfection. The fee charged is as large as possible to make firms indifferent between accepting and rejecting, and thus holds firms to their outside option profit \eqref{eqn:outside}. Qualitatively, the participation fee does not change the posted prices, but redistributes surplus from firms towards the platform. 
\subsubsection{Managed Campaign: Variable Fees}

Alternatively to fixed fees, the platform could charge a commission. That is, instead of taking a fixed fee for each sale, the platform could instead take a fraction $\alpha$ of the revenue of each sale. However, we show that this does result in a difference in the pricing incentives relative to Theorem \ref{thm:managed_eq_best_value}.

\begin{proposition}[Commissions]\label{prop:commissions}
Suppose the platform charged a commission of $\alpha \in (0, 1)$ of the revenue from all sales. Then under best-value pricing, the subgame equilibrium off-platform price set by the firms under commissions satisfies 
\begin{equation} \label{eq:managed_price_commissions}
    0 = \frac{1 - \lambda}{ J}(1 - F(p_{\alpha}) - p f(p_\alpha)) +  \lambda(1-\alpha)\left( \int_{p_{\alpha} } F^{J-1}(v) \ dF(v)  \right).
\end{equation}
Further, $p_\alpha < p_V$, where $p_V$ is the equilibrium price from Theorem \ref{thm:managed_eq_best_value}.
\end{proposition}
\begin{proof}
    Note that with a commission $\alpha$ charged on all sales, the  firm's best-response pricing profit condition becomes 
\[ \Pi_\alpha(p, p') = \frac{1 - \lambda }{J}p(1 - F(p)) +  \lambda (1-\alpha)\int_{\uv}^{\bv}\min(v,p) F^{J-1}(v) \ dF(v). \]
for $p < p'$, since the firm only captures the remaining $(1-\alpha)$ of the on-platform sale value. Since the cross-derivative of $\Pi_\alpha$ in $p, \alpha$ is negative, by Topkis's theorem, it follows that the price solving this first-order condition must be lower than the best-value price $p_V$. 
\end{proof}

Proposition \ref{prop:commissions} shows that by charging commissions, the platform does induce a distortion relative to $p_V$; however, this distortion is better for consumers, since the distortion lowers the off-platform prices relative to the managed campaign in Theorem \ref{thm:managed_eq_best_value}. Intuitively, with commissions, the firm does not have as strong an incentive to raise its prices, since the firm's marginal benefit is diminished (multiplied by $1-\alpha$). 

However, this insight is central to showing that a capped commission structure (i.e., the platform charges a commission, but never charges more than a capped amount $t$ per sale) can yield the Theorem \ref{thm:managed_eq_best_value} outcome; that is, the platform could charge commissions on low-value sales, but implement a cap to induce the firms' to price higher off the platform.

\begin{proposition}[Commissions with Caps]\label{prop:commissions_cap}
    Suppose the platform charged commission of $\alpha \in (0,1)$ of the revenue from all sales, but capped its maximum commission fee at $t$. Then if $t/\alpha \le p_\alpha$, then the equilibrium posted price is the same $p_V$ from \eqref{eq:managed_price_best_value} as in Theorem \ref{thm:managed_eq_best_value}.
 \end{proposition}
 \begin{proof}
     With commission $\alpha$ capped at some $t_j$, the firm's on-platform profit changes. If $p \ge t_j / \alpha$, the profit is
\[ \int_{\uv}^{t_j/\alpha} (1-\alpha) v F^{J-1}(v) dF(v) + \int_{t_j/\alpha}^p (v - t)F^{J-1}(v)dF(v) + \int_p^{\bv} (p-t)F^{J-1}(v)dF(v). \]
If $p < t_j/\alpha$,
\[ \int_{\uv}^{p} (1-\alpha) v F^{J-1}(v) dF(v) + \int_{p}^{\bv} (1-\alpha)p F^{J-1}(v)dF(v), \]
since the cap never binds in this case. In the first case, note that the derivative of the profit term is exactly 
\[ \int_p^{\bv} F^{J-1}(v) dF(v). \]
which is the same term as in best-value pricing. To argue that the firm never wants to set a price in the second case, it suffices to see that the profit function is strictly increasing in this range. Thus, if $t_j / \alpha < p_\alpha$ (where $p_\alpha$ is the commissions price we characterized in Proposition \ref{prop:commissions}), the profit function of the firm must be strictly increasing on this range. Hence, the firm must set a price at least $t_j / \alpha$, and so the firm's best response profit is maximized by $p_V$.
 \end{proof}

 Intuitively, the cap removes the distortion on the marginal benefit from pricing higher induced by commissions; thus, the implementation of a cap can actually harm consumers, as the cap induces higher firm pricing.

\end{document}